\newif\ifbackrefshowonlyfirst
\let\BR@direct@old@hyper@natlinkstart\hyper@natlinkstart
\renewcommand*{\hyper@natlinkstart}{\phantomsection\BR@direct@old@hyper@natlinkstart}
\let\BR@direct@oldBR@citex\BR@citex
\renewcommand*{\BR@citex}{\phantomsection\BR@direct@oldBR@citex}%
\long\def\hyper@page@BR@direct@ref#1#2#3{p. \hyperlink{#3}{#1}}
    \let\backrefxxx\hyper@page@BR@direct@ref
\patchcmd{\Hy@backout}{Doc-Start}{\@currentHref}{}{\errmessage{I can't seem to patch backref}}
\newtheorem{theorem}{Theorem}[section]
\newtheorem{lemma}[theorem]{Lemma}
\newtheorem{corollary}[theorem]{Corollary}
\newtheorem{remark}[theorem]{Remark}
\newcommand{\inner}[2]{ \left\langle #1 \; , \; #2 \right\rangle }
\author[1]{Chun-Neng~Chu}
\author[1,2]{Wei-Fu~Tseng}
\author[1,2,3]{Yen-Huan Li}
\affil[1]{Department of Computer Science and Information Engineering,\protect\\National Taiwan University}
\affil[2]{Department of Mathematics, National Taiwan University}
\affil[3]{Center for Quantum Science and Engineering, \protect\\ National Taiwan University}
\begin{document}

\ifbool{versionArXiv}{
\date{}
\title{Algorithms for Computing the Petz-Augustin Capacity}
}{
\title{Algorithms for Computing \\ the Petz-Augustin Capacity}
}

\ifbool{versionArXiv}{}{\author{%
      \IEEEauthorblockN{
        Chun-Neng Chu\textsuperscript{*}, 
        Wei-Fu Tseng\textsuperscript{*\S}, 
        and Yen-Huan Li\textsuperscript{*\S\P}
      }
      \IEEEauthorblockA{
        \textsuperscript{*}Department of Computer Science and Information Engineering, National Taiwan University\\
        \textsuperscript{\S}Department of Mathematics, National Taiwan University\\
        \textsuperscript{\P}Center for Quantum Science and Engineering, National Taiwan University
      }
      }}

\maketitle
\begin{abstract}
  \ifbool{versionArXiv}{
  }{
      THIS PAPER IS ELIGIBLE FOR THE STUDENT PAPER AWARD.

  }
  We propose the first algorithms with non-asymptotic convergence guarantees for computing the Petz-Augustin capacity, which generalizes the channel capacity and characterizes the optimal error exponent in classical-quantum channel coding. 
  This capacity can be equivalently expressed as the maximization of two generalizations of mutual information: the Petz-R\'{e}nyi information and the Petz-Augustin information.
  To maximize the Petz-R\'{e}nyi information, we show that it corresponds to a convex H\"{o}lder-smooth optimization problem, and hence the universal fast gradient method of Nesterov (2015), along with its convergence guarantees, readily applies. 
  Regarding the maximization of the Petz-Augustin information, we adopt a two-layered approach: we show that the objective function is smooth relative to the negative Shannon entropy and can be efficiently optimized by entropic mirror descent;
  each iteration of entropic mirror descent requires computing the Petz-Augustin information, for which we propose a novel fixed-point algorithm and establish its contractivity with respect to the Thompson metric. 
  Notably, this two-layered approach can be viewed as a generalization of the mirror-descent interpretation of the Blahut-Arimoto algorithm due to He et al. (2024).  

  \ifbool{versionArXiv}{
  }{

      A full version of this paper is available at https://arxiv.org/abs/XXXXX.
  }
\end{abstract}

\section{Introduction}
\label{sec:introduction}
Shannon's seminal work \cite{Shannon1948} introduced the notion of channel capacity, establishing that the optimal transmission error probability vanishes in the asymptotic limit of infinite blocklengths, provided that the code rate remains below the channel capacity.
The channel capacity is formulated as the maximization of the mutual information over input distributions,
and can be efficiently computed by the well-known Blahut-Arimoto algorithm \cite{Blahut1972,Arimoto1972}.

Since blocklengths are finite in practice, it is essential to characterize the optimal transmission error probability in finite-length regimes. 
This leads to the notion of the Augustin capacity, which generalizes the channel capacity and characterizes the optimal error exponent in classical channel coding \cite{Gallager1965,Augustin1978,Csiszar1995,Nakiboglu2019}. 
Subsequently, several algorithms have been proposed to compute the Augustin capacity \cite{Arimoto1976,Jitsumatsu2020,Kamatsuka2024b,Kamatsuka2024c,Kamatsuka2025}.

The concepts of channel capacity and Augustin capacity, as well as their operational meanings, have subsequently been extended to classical-quantum (CQ) channels; these are known, respectively, as the quantum channel capacity \cite{Schumacher1997,Holevo1998} and the Petz-Augustin capacity \cite{Dalai2017,Cheng2019,Li2025b,Renes2025,Cheng2025}. 
However, while the Blahut-Arimoto algorithm has been generalized to compute the quantum channel capacity \cite{Nagaoka1998,Li2019b,Ramakrishnan2021,Hayashi2024,He2024a,Hayashi2025}, the computation of the Petz-Augustin capacity remains unexplored.

To address this gap, we study the computation of the Petz-Augustin capacity of order $\alpha$, denoted by $C_{\mathcal{W},\alpha}$, for general CQ channels $\mathcal{W}$.
This capacity can be equivalently expressed as the maximization of two generalizations of the quantum mutual information, known, respectively, as the Petz-R\'{e}nyi information of order $\alpha$, denoted by $I_{\alpha}^{\text{R}}(p,\mathcal{W})$, and the Petz-Augustin information of order $\alpha$, denoted by $I_{\alpha}^{\text{A}}(p,\mathcal{W})$ \cite{Cheng2022}.
We propose two approaches tailored to each maximization problem and show that their convergence rates exhibit different advantages depending on $\alpha$.
Our approaches apply to the regime $\alpha\in[1/2,1)$, where $C_{\mathcal{W},\alpha}$ characterizes the random-coding exponent for CQ channel coding \cite{Li2025b,Renes2025,Cheng2025}.
The computation for the range $\alpha\in(0,1/2)$, which is relevant to the sphere-packing exponent \cite{Dalai2017,Cheng2019}, is left for future work.

\subsection{Problem Formulation}
For a general classical-quantum (CQ) channel $\mathcal{W}$ mapping from an input alphabet $\Set{1,\dots,n}$ to the set $\mathcal{D}_d$ of quantum states in $\mathbb{C}^{d \times d}$, the Petz-Augustin capacity $C_{\mathcal{W},\alpha}$ of order $\alpha\in(0,1)$ is defined via the following optimization problems over the probability simplex $\Delta_n$ in $\mathbb{R}^n$ \cite[Equation~(22)]{Cheng2022}: 
\begin{align}
    C_{\mathcal{W},\alpha} 
    &\coloneqq -\min_{p\in\Delta_n}g_{\alpha}^{\text{R}}(p)\label{def:RenyiCapacity}\\
    &=-\min_{p\in\Delta_n} g_{\alpha}^{\text{A}}(p)\label{def:AugCapacity},
\end{align}
where $g_{\alpha}^{\text{R}}(p) \coloneqq -I_{\alpha}^{\text{R}}(p,\mathcal{W})$ and $g_{\alpha}^{\text{A}}(p) \coloneqq -I_{\alpha}^{\text{A}}(p,\mathcal{W})$ are the negative Petz-R\'{e}nyi and Petz-Augustin information, respectively.
The Petz-R\'{e}nyi information and the Petz-Augustin information are defined as:
\begin{align}
    I_{\alpha}^{\text{R}}(p,\mathcal{W})
    \coloneqq \frac{\alpha}{\alpha-1} \log \left(\mathrm{Tr}\left[\left(\sum_{j=1}^n p[j] \mathcal{W}(j)^{\alpha}\right)^{1/\alpha}\right]\right),\nonumber
\end{align}
and
\begin{align}
    \label{def:AugInfo}
    I_{\alpha}^{\text{A}}(p,\mathcal{W})
    \coloneqq  \min_{Q\in\mathcal{D}_d}\sum_{j=1}^n p[j] D_{\alpha}\left(\mathcal{W}(j)\|Q\right),
\end{align}
where $p[j]$ denotes the $j^{\text{th}}$ coordinate of $p$.
Here, the Petz-R\'{e}nyi divergence $D_{\alpha}(A\|Q)$ is given by \cite{Petz1986}:
\begin{align}
    D_{\alpha}(A \| Q) 
    \coloneqq  \frac{1}{\alpha - 1} \log \left(\mathrm{Tr}\left[A^{\alpha} Q^{1-\alpha}\right]\right),\nonumber
\end{align}
whenever it is well-defined, 
and $D_\alpha ( A \| Q ) \coloneqq  \infty$ otherwise. 

\subsection{Main Results}

\subsubsection{H\"{o}lder-Smooth Optimization}
In Section~\ref{sec:MaximizingRenyiInfo}, we propose a convex H\"{o}lder-smooth optimization approach for solving the optimization problem \eqref{def:RenyiCapacity}.

Since $g_{\alpha}^{\text{R}}(p)$ in \eqref{def:RenyiCapacity} is only known to be quasi-convex \cite[Proposition~(b)]{Cheng2022},
standard convex optimization methods are not directly applicable.
To address this, we exponentiate the original objective function $g_{\alpha}^{\text{R}}(p)$, leading to the convex optimization problem:
\begin{align}
    \label{def:ExpRenyiCapacity}
    \min_{p\in\Delta_n}\tilde{g}_{\alpha}^{\text{R}}(p),
    \quad \tilde{g}_{\alpha}^{\text{R}}(p)
    \coloneqq\mathrm{Tr}\left[\left(\sum_{j=1}^n p[j] \mathcal{W}(j)^{\alpha}\right)^{1/\alpha}\right],
\end{align}
where the convexity of $\tilde{g}_{\alpha}^{\text{R}}(p)$ follows from the fact that the $r^{\text{th}}$ power of the Schatten $r$-norm,  $X\mapsto \norm{X}_r^r$ is convex for $r\geq 1$.

\ifbool{versionArXiv}{
    Furthermore, we show in Appendix~\ref{appendix:SVM} that the classical case of \eqref{def:ExpRenyiCapacity} corresponds to a generalized support vector machine (SVM) problem studied by Rodomanov et al. \cite{Rodomanov2024}.
}{
    Furthermore, we show in the full version \cite{Chu2026} that the classical case of \eqref{def:ExpRenyiCapacity} corresponds to a generalized support vector machine (SVM) problem studied by Rodomanov et al. \cite{Rodomanov2024}. }
They pointed out that this generalized SVM problem satisfies the H\"{o}lder smoothness condition \cite{Nesterov2018a}, a property generalizing the Lipschitz continuity and Lipschitz gradient conditions  in the convex optimization literature, which allows H\"{o}lder-smooth optimization methods to efficiently solve this problem.
By leveraging matrix analysis tools, we prove that this H\"{o}lder smoothness condition still holds for the general quantum setting in \eqref{def:ExpRenyiCapacity}.
Consequently, existing H\"{o}lder-smooth optimization methods \cite{Rodomanov2024,Nemirovskii1985,Devolder2014,Nesterov2015,Li2025,Zhao2025} and their convergence guarantees become applicable.
By combining these methods with our analysis of the
H\"{o}lder smoothness parameters of $\tilde{g}_{\alpha}^{\text{R}}(p)$, we obtain a convergence rate of $O\left(T^{1-1.5/\alpha}\right)$
for computing
$C_{\mathcal{W},\alpha}$ when $\alpha\in[1/2,1)$, where $T$ denotes the number of iterations.

Curiously, while the Petz-Augustin capacity generalizes the channel capacity, which can be computed by the Blahut-Arimoto algorithm, the algorithm yielded by the H\"{o}lder-smooth approach differs significantly from the Blahut-Arimoto algorithm. 
Furthermore, its convergence rate approaches $O(T^{-0.5})$ as $\alpha$ approaches $1$, which is significantly slower than the well known $O(T^{-1})$ rate of the Blahut-Arimoto algorithm \cite{Arimoto1972}. 
This motivates us to propose the following algorithm, which, along with the convergence analysis, was originally developed in our unpublished preprint \cite[Section~6]{Chu2025}.

\subsubsection{Blahut-Arimoto-Type Algorithm}
In Section~\ref{sec:MaximizingAugInfo}, we propose a Blahut-Arimoto-type algorithm for solving the optimization problem \eqref{def:AugCapacity}, namely, the minimization of the negative Petz-Augustin information.
Unlike the Petz-R\'{e}nyi information, the Petz-Augustin information $I_{\alpha}^{\text{A}}(p,\mathcal{W})$ defined in \eqref{def:AugInfo} does not admit a closed-form expression. 
Consequently, we decompose the optimization problem \eqref{def:AugCapacity} into two nested subproblems.

For the inner subproblem \eqref{def:AugInfo}, we propose a novel fixed-point algorithm for computing $I_{\alpha}^{\text{A}}(p,\mathcal{W})$ for $\alpha\in(1/2,1)\cup(1,\infty)$.
By establishing its contractivity with respect to the Thompson metric \cite{Thompson1963}, 
we prove that it converges at a linear rate of $O(\abs{1-1/\alpha}^{T})$.
Building on this result, we can compute $\varepsilon$-approximate values of $g_{\alpha}^{\text{A}}(p)$ and its gradient $\nabla g_{\alpha}^{\text{A}}(p)$ in $O(\log(1/\varepsilon))$ time.

Then, we treat the evaluations of $g_{\alpha}^{\text{A}}(p)$ and $\nabla g_{\alpha}^{\text{A}}(p)$ as black-box oracles and apply entropic mirror descent (EMD) \cite{Ben-Tal2001,Beck2003} to solve the outer subproblem \eqref{def:AugCapacity} for orders $\alpha\in(1/2,1)$.
Our approach is inspired by the recent work of He et al. \cite{He2024a}, who interpret the Blahut-Arimoto algorithm as EMD.
They further establish a relative smoothness property \cite{Bauschke2017,Lu2018}---a generalization of the Lipschitz gradient condition---for the negative mutual information.
This property directly yields the same convergence rate of $O(T^{-1})$ for the Blahut-Arimoto algorithm as that originally established by Arimoto \cite{Arimoto1972}.
In this work, we prove that this relative smoothness property
also holds for the negative Petz-Augustin information $g_{\alpha}^{\text{A}}(p)$.
This ensures that EMD is applicable, and its convergence rate of $O(T^{-1})$ complements the $O(T^{1-1.5/\alpha})$ rate of our H\"{o}lder-smooth optimization approach when $\alpha$ is close to $1$.

Notably, our Blahut-Arimoto-type algorithm has an additional advantage in the study of channel coding with input constraints---for instance, the optimal error exponents of constant composition codes for CQ channels can be
characterized by the Petz-Augustin information with fixed input distributions \cite{Dalai2017,Cheng2025,Shi2025}.  
He et al. \cite{He2024a} have proposed a primal-dual hybrid gradient (PDHG) method to compute the channel capacity under linear input constraints.
This method adapts the Blahut-Arimoto algorithm by incorporating additional dual updates to handle the linear constraints, and its convergence analysis leverages the relative smoothness property of the negative mutual information.
Since the same relative smoothness property for the negative Petz-Augustin information is established in this work, this PDHG method can be readily applied to compute the Petz-Augustin capacity \eqref{def:AugCapacity} under the same class of input constraints.
In contrast, it is unclear how to apply our H\"{o}lder-smooth approach to this constrained setting.

\section{Related Work}
\label{sec:RelatedWork}
This work is the first to propose algorithms for computing the Petz-Augustin capacity with non-asymptotic convergence guarantees.
Although several algorithms have been proposed for computing the classical Augustin capacity \cite{Arimoto1976,Jitsumatsu2020,Kamatsuka2024b,Kamatsuka2024c,Kamatsuka2025}, their extension to the quantum setting remains unclear.

Notably, a classical channel can be viewed as a special case of the CQ channel where all output states are diagonal matrices. 
Thus, our Blahut-Arimoto-type algorithm is directly applicable to the classical setting. 
It is the first to establish a non-asymptotic convergence guarantee for computing the Augustin capacity via maximizing the Augustin information, whereas the algorithm proposed by Kamatsuka et al. \cite{Kamatsuka2024b} only provides an asymptotic guarantee.

For the convex H\"{o}lder-smooth optimization, existing works have applied this framework to problems such as linear regression, logistic regression, and SVMs \cite{Rodomanov2024,Nesterov2015,Li2025}.
However, these problems do not involve quantum states in their formulations.
To address the challenges arising from the non-commutativity of quantum states, we leverage matrix analysis tools to prove that the objective function in the optimization problem \eqref{def:ExpRenyiCapacity} satisfies the H\"{o}lder smoothness condition.
This enables us to apply existing H\"{o}lder-smooth optimization methods to compute the Petz-Augustin capacity.

Finally, for computing the Petz-Augustin information, our method is the first to provide a non-asymptotic convergence guarantee.
While entropic mirror descent with Armijo line search \cite{Li2019a} or with the Polyak step size \cite{You2022} is applicable to this problem, it only guarantees asymptotic convergence.
Although several algorithms with non-asymptotic convergence guarantees have been proposed for computing the classical Augustin information \cite{Augustin1978,Tsai2024,Wang2024,Kamatsuka2025,Karakos2008}, it remains unclear how to extend these methods to the quantum setting.
For a more comprehensive discussion, we refer readers to our unpublished preprint \cite[Section~2]{Chu2025}, which provides a detailed comparison between our algorithm and existing approaches for computing the Augustin information.

\section{Preliminaries}
\label{sec:Preliminaries}
This section introduces the notation used throughout this work and the convex optimization frameworks employed in the design and analysis of our algorithms.

\subsection{Notations}
Let $\mathcal{B}_d$ denote the set of Hermitian matrices in $\mathbb{C}^{d \times d}$, and let $\mathcal{B}_{d,+}$ and $\mathcal{B}_{d,++}$ denote the corresponding positive semidefinite and positive definite cones, respectively.
For any function $f:\mathbb{R}\to\mathbb{R}$, we define its vector and matrix extensions as follows: for $x\in\mathbb{R}^n$, $f(x)\coloneqq \left(f(x[1]),\dots,f(x[n])\right)$; for $X\in\mathcal{B}_d$ with eigendecomposition $X=\sum_{i=1}^d \lambda_i u_i u_i^{*}$, $f(X)\coloneqq \sum_{i=1}^d f(\lambda_i) u_i u_i^{*}$.
Let $\norm{\cdot}_p$ denote the $\ell_p$-norm for vectors and the Schatten $p$-norm for matrices.
For any norm $\norm{\cdot}$ on $\mathbb{R}^n$, $\norm{\cdot}_{*}$ denotes its dual norm.
For $X\in\mathbb{C}^{d\times d}$,  $\abs{X}$ denote the modulus of $X$.
Let $\mathbf{1}_n$ denote the all-ones vector in $\mathbb{R}^n$.
Let $\odot$ denote the coordinate-wise product.
For any set $\mathcal{C}$, let $\mathrm{relint}(\mathcal{C})$ denote the relative interior of $\mathcal{C}$.
The Bregman divergence generated by $h:\mathbb{R}^n\to\mathbb{R}$ is defined as $B_h(y, x)\coloneqq h(y)-h(x)-\inner{\nabla h(x)}{y-x}$ for $x,y\in\mathbb{R}^n$.

\subsection{Mirror Descent}
Mirror descent is a standard first-order framework for convex optimization.
Specifically, consider minimizing a convex and differentiable function $f(x)$ over a closed convex set $\mathcal{C}\subseteq\mathbb{R}^n$.
This method computes the next iterate from the current iterate $z$ toward a direction $-v$ with step size $\eta>0$ via the following mapping:
\begin{equation}
  \label{alg:BregProx}
  T_{h,\mathcal{C}}(z, v, \eta)
  \in \arg\min_{x\in\mathcal{C}}\inner{v}{x} + \frac{1}{\eta}B_h(x,z).
\end{equation}
Notably, when $h(x)=\norm{x}_2^2/2$ and $v=\nabla f(z)$, the iteration rule \eqref{alg:BregProx} recovers the standard projected gradient method.

In this work, to handle the simplex constraint, we adopt the negative Shannon entropy $h(x)=\inner{x}{\log(x)}$, which is a widely used choice for such geometries.
This choice leads to the  following closed-form update:
\begin{equation}
  \label{alg:EMD}
  T_{h,\Delta_n}(z, v, \eta)
  =\frac{z\odot\exp\left(-\eta v\right)}{\norm{z\odot\exp\left(-\eta v\right)}_1}.
\end{equation}
By setting $v=\nabla f(z)$, the update rule \eqref{alg:EMD} recovers the well-known entropic mirror descent (EMD) algorithm \cite{Ben-Tal2001,Beck2003}.

\subsection{Relative Smoothness}
The relative smoothness condition \cite{Bauschke2017,Lu2018} is a generalization of the Lipschitz gradient condition in the convex optimization literature.
Specifically, a function $f$ is said to be $L$-smooth relative to a convex function $h$ on a convex set $\mathcal{C}\subseteq\mathbb{R}^n$  for some $L>0$ if $Lh-f$ is convex on $\mathcal{C}$.

Under the relative smoothness condition, mirror descent \eqref{alg:BregProx} enjoys the following convergence guarantee:
\begin{lemma}(\cite[Theorem~3.1]{Lu2018})
    \label{lemma:MirrorConv}
    Consider the problem of minimizing a convex function $f$ over the set $\mathcal{C}$,
    where \( f \) is $L$-smooth relative to a convex function \( h \) on \( \mathcal{C} \).
    Suppose that an iterative algorithm computes the next iterate by $x_{t+1} = T_{h,\mathcal{C}}(x_t, \nabla f(x_t), 1/L)$, with $x_1\in\mathrm{relint}(\mathcal{C})$.
    Then, for any $T\in\mathbb{N}$, we have
    \begin{equation}
        f(x_{T+1}) - f(x)
        \leq \frac{L\cdot B_{h}(x \| x_1)}{T},\quad\forall x\in\mathcal{C}.\nonumber
    \end{equation}
\end{lemma}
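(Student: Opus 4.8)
The plan is to reconstruct the now-standard relative-smoothness analysis of mirror descent from three ingredients: a Bregman descent inequality implied by relative smoothness, the first-order optimality condition of the proximal subproblem defining $T_{h,\mathcal{C}}$, and the three-point identity for Bregman divergences. Combined correctly, these yield a one-step recursion whose right-hand side telescopes, and averaging (using monotonicity of the function values) produces the $O(1/T)$ rate.

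First I would record the Bregman descent lemma. Because $f$ is $L$-smooth relative to $h$ on $\mathcal{C}$, the map $Lh - f$ is convex, so its gradient inequality gives
\[
  f(y) \le f(x) + \inner{\nabla f(x)}{y - x} + L\,B_h(y, x), \qquad x, y \in \mathcal{C}.
\]
Applying this with $y = x_{t+1}$ and $x = x_t$, and then inserting convexity of $f$ in the form $\inner{\nabla f(x_t)}{x - x_t} \le f(x) - f(x_t)$ for an arbitrary comparator $x \in \mathcal{C}$, I obtain
\[
  f(x_{t+1}) \le f(x) + \inner{\nabla f(x_t)}{x_{t+1} - x} + L\,B_h(x_{t+1}, x_t).
\]
Second, I would use that $x_{t+1} = T_{h,\mathcal{C}}(x_t, \nabla f(x_t), 1/L)$ minimizes $x \mapsto \inner{\nabla f(x_t)}{x} + L\,B_h(x, x_t)$ over $\mathcal{C}$; its variational inequality reads $\inner{\nabla f(x_t) + L(\nabla h(x_{t+1}) - \nabla h(x_t))}{x - x_{t+1}} \ge 0$ for all $x \in \mathcal{C}$, i.e.\ $\inner{\nabla f(x_t)}{x_{t+1} - x} \le L\inner{\nabla h(x_{t+1}) - \nabla h(x_t)}{x - x_{t+1}}$. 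Rewriting the right-hand side with the three-point identity $\inner{\nabla h(x_{t+1}) - \nabla h(x_t)}{x - x_{t+1}} = B_h(x, x_t) - B_h(x, x_{t+1}) - B_h(x_{t+1}, x_t)$ and substituting into the previous display, the two copies of $L\,B_h(x_{t+1}, x_t)$ cancel exactly — this cancellation, made possible precisely by the step size $1/L$, is the heart of the argument — leaving
\[
  f(x_{t+1}) - f(x) \le L\bigl(B_h(x, x_t) - B_h(x, x_{t+1})\bigr), \qquad \forall x \in \mathcal{C}.
\]

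Finally I would sum this recursion over $t = 1, \dots, T$. The right-hand side telescopes to $L\bigl(B_h(x, x_1) - B_h(x, x_{T+1})\bigr) \le L\,B_h(x, x_1)$ by nonnegativity of the Bregman divergence. Taking $x = x_t$ in the recursion shows $f(x_{t+1}) \le f(x_t)$, so $\{f(x_t)\}_t$ is nonincreasing and hence $T\bigl(f(x_{T+1}) - f(x)\bigr) \le \sum_{t=1}^{T}\bigl(f(x_{t+1}) - f(x)\bigr) \le L\,B_h(x, x_1)$; dividing by $T$ gives the claim. I do not expect a genuine obstacle here — the statement is essentially textbook (it is quoted from \cite{Lu2018}) — so the only points requiring care are bookkeeping ones: that $h$ is differentiable at each iterate and that the subproblem defining $T_{h,\mathcal{C}}$ admits a minimizer with iterates remaining in $\mathrm{relint}(\mathcal{C})$, so that $\nabla h(x_{t+1})$ in the optimality condition is well-defined. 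Both are built into the mirror-descent framework and hold for the negative-entropy choice \eqref{alg:EMD} used later in the paper.
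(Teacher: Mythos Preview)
Your proof is correct and follows the standard relative-smoothness analysis; the paper does not supply its own proof of this lemma but simply cites \cite[Theorem~3.1]{Lu2018}, whose argument is essentially the one you have reproduced.
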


\subsection{H\"{o}lder Smoothness}
Another extension of the Lipschitz gradient condition is the H\"{o}lder smoothness condition \cite{Nesterov2018a}, which interpolates between the Lipschitz continuity and Lipschitz gradient conditions.
Specifically, a differentiable function $f$ is said to be  $(\nu,L_{\nu})$-H\"{o}lder smooth on a convex set $\mathcal{C}\subseteq\mathbb{R}^n$ with respect to a norm $\norm{\cdot}$ for $\nu\in[0,1]$ and $L_{\nu}>0$ if the following inequality holds:
\begin{align}
  \label{def:HolderSmooth}
  \norm{\nabla f(x)-\nabla f(y)}_{*} \leq L_{\nu}\norm{x-y}^{\nu},\quad\forall x,y \in \mathcal{C}.
\end{align}
Notably, the cases $\nu=0$ and $\nu=1$ in \eqref{def:HolderSmooth} recover the Lipschitz continuity and Lipschitz gradient conditions, respectively.

Subsequently, several H\"{o}lder-smooth optimization methods have been proposed \cite{Rodomanov2024,Nemirovskii1985,Devolder2014,Nesterov2015,Li2025,Zhao2025}, whose iteration complexities $O((L_{\nu}/\varepsilon)^{2/(1+3\nu)})$ for computing $\varepsilon$-approximate solutions to general convex H\"{o}lder-smooth optimization problems are known to be optimal \cite{Nemirovsky1983}.

\section{Maximizing the Petz-R\'{e}nyi Information}
\label{sec:MaximizingRenyiInfo}
This section presents our convex H\"{o}lder-smooth optimization approach for computing the Petz-Augustin capacity $C_{\mathcal{W},\alpha}$ via the maximization of the Petz-R\'{e}nyi information \eqref{def:RenyiCapacity}.
First, Lemma~\ref{lemma:ExpRenyiHolder} characterizes the H\"{o}lder smoothness of the objective function in the optimization problem \eqref{def:ExpRenyiCapacity}.
Second, Lemma~\ref{lemma:FGMConv} leverages this analysis and the universal fast gradient method (FGM) \cite{Nesterov2015} to establish convergence guarantees for solving \eqref{def:ExpRenyiCapacity}.
Finally, Corollary~\ref{cor:RenyiCapacityConv} translates these guarantees back to the maximization problem \eqref{def:RenyiCapacity}.

The following H\"{o}lder smoothness property is our key technical breakthrough in this section.
\begin{lemma}
  \label{lemma:ExpRenyiHolder}
  For any $\alpha\in[1/2,1)$, the function $\tilde{g}_{\alpha}^{\text{R}}$ in \eqref{def:ExpRenyiCapacity} is $((1-\alpha)/\alpha,1/\alpha)$-H\"{o}lder smooth on $\Delta_n$ with respect to $\norm{\cdot}_1$.
\end{lemma}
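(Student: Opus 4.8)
The plan is to bound the operator/Schatten-$\infty$-to-$\ell_\infty$ quantity $\norm{\nabla \tilde g_\alpha^{\text{R}}(p) - \nabla \tilde g_\alpha^{\text{R}}(q)}_\infty$ by $(1/\alpha)\norm{p-q}_1^{(1-\alpha)/\alpha}$, since the dual of $\norm{\cdot}_1$ on $\mathbb{R}^n$ is $\norm{\cdot}_\infty$. First I would compute the gradient explicitly. Writing $A(p) \coloneqq \sum_j p[j]\,\mathcal{W}(j)^\alpha$, which is positive (semi)definite and linear in $p$, we have $\tilde g_\alpha^{\text{R}}(p) = \mathrm{Tr}[A(p)^{1/\alpha}]$, so by the chain rule for the trace of a matrix function, $\partial \tilde g_\alpha^{\text{R}}(p)/\partial p[j] = (1/\alpha)\,\mathrm{Tr}\!\left[A(p)^{1/\alpha - 1}\mathcal{W}(j)^\alpha\right]$. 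Hence each coordinate of $\nabla\tilde g_\alpha^{\text{R}}(p) - \nabla\tilde g_\alpha^{\text{R}}(q)$ equals $(1/\alpha)\,\mathrm{Tr}\!\left[\big(A(p)^{1/\alpha-1} - A(q)^{1/\alpha-1}\big)\mathcal{W}(j)^\alpha\right]$. Bounding this by Hölder's inequality for Schatten norms and using $0 \preceq \mathcal{W}(j)^\alpha \preceq I$ (since $\mathcal{W}(j)$ is a density matrix and $\alpha\in(0,1)$, so $\|\mathcal{W}(j)^\alpha\|_\infty\le 1$ and $\mathrm{Tr}\,\mathcal{W}(j)^\alpha \le \mathrm{rank}$), I reduce the problem to controlling $\big\|A(p)^{1/\alpha-1} - A(q)^{1/\alpha-1}\big\|$ in a suitable Schatten norm in terms of $\|A(p)-A(q)\|$.

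The core analytic input is an operator Hölder-continuity estimate for the map $X \mapsto X^{r}$ on the positive cone, with exponent $r = 1/\alpha - 1 \in (0,1]$ when $\alpha \in [1/2,1)$. For $r\in(0,1]$, the function $t\mapsto t^r$ is operator concave and operator monotone, and a classical fact (see e.g. Birman–Koplienko–Solomyak, or Ando's inequality) gives the operator-norm bound $\|X^r - Y^r\| \le \|X - Y\|^r$ for positive semidefinite $X,Y$ and $r\in[0,1]$; more refined versions control this in every Schatten $p$-norm. Combining this with the linearity $A(p) - A(q) = \sum_j (p[j]-q[j])\mathcal{W}(j)^\alpha$, so that $\|A(p)-A(q)\|_\infty \le \sum_j |p[j]-q[j]| \cdot \|\mathcal{W}(j)^\alpha\|_\infty \le \|p-q\|_1$, yields $\|A(p)^r - A(q)^r\| \le \|p-q\|_1^r = \|p-q\|_1^{(1-\alpha)/\alpha}$. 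Feeding this back through the trace/Hölder step, with the normalization $\mathrm{Tr}\,\mathcal{W}(j)^\alpha$ and the rank absorbed appropriately (one must be a little careful that the bound comes out with constant exactly $1/\alpha$ rather than something dimension-dependent — this likely uses $\mathrm{Tr}[ |A(p)^r - A(q)^r| \cdot \mathcal{W}(j)^\alpha] \le \|A(p)^r - A(q)^r\|_\infty \cdot \mathrm{Tr}[\mathcal{W}(j)^\alpha]$ together with a sharper bound, or perhaps the cyclicity trick with $A(p)^{r}$'s favorable properties), gives the claimed $(1/\alpha)$-prefactor.

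The main obstacle, and the place where "matrix analysis tools" are genuinely needed, is the noncommutative Hölder estimate for the power function together with getting the constant sharp. In the classical (diagonal) case this is just the scalar inequality $|s^r - t^r| \le |s-t|^r$ for $r\in[0,1]$, which is elementary; in the quantum case $A(p)$ and $A(q)$ do not commute, so one cannot diagonalize simultaneously, and one must invoke operator-monotonicity/operator-concavity of $t\mapsto t^r$ (valid precisely for $r\in[0,1]$, which is why the range $\alpha\in[1/2,1)$ appears — it forces $1/\alpha - 1 \le 1$) and an integral representation of $X^r$ to transfer the scalar bound. A secondary technical point is verifying differentiability of $\tilde g_\alpha^{\text{R}}$ on all of $\Delta_n$ including the boundary, where $A(p)$ may be singular; since $1/\alpha > 1$ the map $X \mapsto \mathrm{Tr}[X^{1/\alpha}]$ is still $C^1$ on the closed positive semidefinite cone (its derivative $(1/\alpha)X^{1/\alpha-1}$ extends continuously because $1/\alpha - 1 > 0$), so this should go through without fuss, but it is worth stating. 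I would close by translating: $\nabla\tilde g_\alpha^{\text{R}}$ is $((1-\alpha)/\alpha)$-Hölder with constant $1/\alpha$ with respect to $\|\cdot\|_1$, which is exactly the assertion.
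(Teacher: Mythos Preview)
Your overall strategy---compute the gradient, apply Schatten-H\"older to each coordinate, then invoke the operator H\"older continuity of $X\mapsto X^{r}$ for $r=(1-\alpha)/\alpha\in(0,1]$---is exactly the paper's. The gap is the one you flag yourself: your sketched H\"older pairing $(\|\cdot\|_\infty,\|\cdot\|_1)$ does \emph{not} give the dimension-free constant $1/\alpha$. With that pairing you get
\[
\bigl|\mathrm{Tr}[(A(p)^{r}-A(q)^{r})\,\mathcal{W}(j)^{\alpha}]\bigr|
\le \|A(p)^{r}-A(q)^{r}\|_\infty\cdot \mathrm{Tr}[\mathcal{W}(j)^{\alpha}],
\]
and for $\alpha<1$ the factor $\mathrm{Tr}[\mathcal{W}(j)^{\alpha}]$ can be as large as $d^{1-\alpha}$ (take $\mathcal{W}(j)=I/d$), so you end up with $L_\nu = d^{1-\alpha}/\alpha$ rather than $1/\alpha$. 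None of the suggested fixes (``sharper bound'', ``cyclicity trick'') recovers the lost factor.

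The paper's resolution is to choose the H\"older exponents as $\bigl(\tfrac{1}{1-\alpha},\tfrac{1}{\alpha}\bigr)$ rather than $(\infty,1)$. The point is that
\[
\|\mathcal{W}(j)^{\alpha}\|_{1/\alpha}
=\bigl(\mathrm{Tr}[\mathcal{W}(j)]\bigr)^{\alpha}=1
\]
exactly, since $\mathcal{W}(j)$ is a density matrix; this is what kills the dimension dependence. One then needs the power inequality in the Schatten $1/(1-\alpha)$-norm, namely $\|A^{r}-B^{r}\|_{1/(1-\alpha)}\le \|\,|A-B|^{r}\,\|_{1/(1-\alpha)}=\|A-B\|_{1/\alpha}^{r}$ (this is the ``refined version in every Schatten $p$-norm'' you allude to), and the triangle inequality in $\|\cdot\|_{1/\alpha}$ together with $\|\mathcal{W}(k)^{\alpha}\|_{1/\alpha}=1$ gives $\|A(p)-A(q)\|_{1/\alpha}\le \|p-q\|_1$. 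Chaining these yields precisely $(1/\alpha)\|p-q\|_1^{(1-\alpha)/\alpha}$ with no extraneous factors.
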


The proof of Lemma~\ref{lemma:ExpRenyiHolder} involves bounding the coordinate-wise differences of $\nabla \tilde{g}_{\alpha}^{\text{R}}$ at two arbitrary points $p_1,p_2\in\Delta_n$.
Our derivation relies on matrix generalizations of both the H\"{o}lder inequality \cite[Theorem~6.20]{Hiai2014} and the scalar inequality $\abs{x^r-y^r}\leq \abs{x-y}^r$ for $r\in\left(0,1\right]$ \cite[Theorem~6.35, Exercise~4.45]{Hiai2014}.
\ifbool{versionArXiv}{
    The full proof is provided in Appendix~\ref{appendix:ProofOfExpRenyiHolder}.
}{
Details are deferred to the full version \cite{Chu2026}.
}

Then, we apply FGM \cite{Nesterov2015}---an optimal convex H\"{o}lder-smooth optimization method---to solve \eqref{def:ExpRenyiCapacity}, and the following lemma states the resulting convergence guarantee.
\ifbool{versionArXiv}{
  We defer implementation details of FGM to Appendix~\ref{appendix:FGM}.
}{
  We leave implementation details of FGM to the full version \cite{Chu2026}.
}
\begin{lemma}[{\cite[Theorem~3]{Nesterov2015}}]
  \label{lemma:FGMConv}
  Consider applying FGM to solve the optimization problem \eqref{def:ExpRenyiCapacity} with $\alpha\in[1/2,1)$.
  Let the first iterate be $p_1=\mathbf{1}_n/n$, and fix an arbitrary error tolerance $\varepsilon>0$.
  Then, the output $p_{T+1}\in\Delta_n$ after $T$ iterations satisfies
  \begin{align} 
    \tilde{g}_{\alpha}^{\text{R}}(p_{T+1}) -\min_{p\in\Delta_n}\tilde{g}_{\alpha}^{\text{R}}(p)\leq \min_{\nu,L_{\nu}}\left(\frac{2^{2+4\nu}L_{\nu}^{2}}{\varepsilon^{1-\nu}T^{1+3\nu}}\right)^{1/(1+\nu)}\log(n)+\frac{\varepsilon}{2},\nonumber
  \end{align}
  where the minimization is taken over all parameters $(\nu,L_{\nu})$ for which $\tilde{g}_{\alpha}^{\text{R}}$ satisfies the H\"{o}lder smoothness condition \eqref{def:HolderSmooth} on $\Delta_n$ with respect to the $\ell_1$-norm.
\end{lemma}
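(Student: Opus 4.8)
The statement is a specialization of the convergence guarantee for Nesterov's universal fast gradient method \cite[Theorem~3]{Nesterov2015} to the problem \eqref{def:ExpRenyiCapacity}. The plan is to check that this problem meets the hypotheses of that theorem and then to instantiate its generic constants.

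First, the objective $\tilde{g}_{\alpha}^{\text{R}}$ is convex on $\Delta_n$, as noted below \eqref{def:ExpRenyiCapacity}, and it is differentiable there: for $\alpha\in[1/2,1)$ the scalar map $t\mapsto t^{1/\alpha}$ is continuously differentiable on $[0,\infty)$ since $1/\alpha>1$, so $X\mapsto\mathrm{Tr}[X^{1/\alpha}]$ is differentiable on $\mathcal{B}_{d,+}$, and hence so is its composition with the linear map $p\mapsto\sum_{j=1}^n p[j]\,\mathcal{W}(j)^{\alpha}$. Second, Lemma~\ref{lemma:ExpRenyiHolder} shows that $\tilde{g}_{\alpha}^{\text{R}}$ is $((1-\alpha)/\alpha,1/\alpha)$-H\"{o}lder smooth on $\Delta_n$ with respect to $\norm{\cdot}_1$; more generally, whenever $\tilde{g}_{\alpha}^{\text{R}}$ satisfies \eqref{def:HolderSmooth} for some pair $(\nu,L_{\nu})$, the universal method---being parameter-free---enjoys the bound of \cite[Theorem~3]{Nesterov2015} with that pair, and therefore with the infimum over all admissible pairs. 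This is the source of the $\min_{\nu,L_{\nu}}$ in the statement.

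Third, we fix the prox-setup required by the method: the mirror map $h(x)=\inner{x}{\log(x)}$ on the compact set $\Delta_n$, paired with the norm $\norm{\cdot}_1$. By Pinsker's inequality, $h$ is $1$-strongly convex with respect to $\norm{\cdot}_1$ on $\Delta_n$, which supplies the norm-compatibility hypothesis. With the initialization $p_1=\mathbf{1}_n/n$ we have $\nabla h(p_1)=(1-\log(n))\mathbf{1}_n$, so the linear term of the Bregman divergence vanishes on the simplex and
\begin{align}
    B_h(p,p_1)=\inner{p}{\log(p)}+\log(n)\leq\log(n),\quad\forall p\in\Delta_n,\nonumber
\end{align}
using $\inner{p}{\log(p)}\leq 0$. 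Hence the ``prox-diameter'' constant appearing in \cite[Theorem~3]{Nesterov2015} is at most $\log(n)$.

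Substituting the strong-convexity constant $\sigma=1$, the prox-diameter bound $\log(n)$, and an admissible H\"{o}lder pair $(\nu,L_{\nu})$ into \cite[Theorem~3]{Nesterov2015} yields, after $T$ iterations, the displayed inequality: the factor $L_{\nu}^{2/(1+\nu)}$ arises from $L_{\nu}^{2}$ raised to the power $1/(1+\nu)$, the dependence $\varepsilon^{-(1-\nu)/(1+\nu)}T^{-(1+3\nu)/(1+\nu)}$ is the universal method's rate, and the additive $\varepsilon/2$ reflects the inexactness tolerated by its line search. Taking the minimum over all admissible $(\nu,L_{\nu})$---which costs nothing, since the algorithm never uses these parameters---completes the argument. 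The only genuinely nontrivial ingredient is the H\"{o}lder smoothness in the noncommutative setting, which is exactly Lemma~\ref{lemma:ExpRenyiHolder}; everything else here is a routine instantiation of a black-box guarantee.
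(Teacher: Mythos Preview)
Your proposal is correct and follows the same route as the paper: invoke \cite[Theorem~3]{Nesterov2015} for the entropic prox setup on $\Delta_n$ and bound the Bregman radius from $p_1=\mathbf{1}_n/n$ by $\log(n)$. You add a few more verifications (differentiability of $\tilde{g}_{\alpha}^{\text{R}}$, strong convexity of $h$ via Pinsker) that the paper leaves implicit, but the substance is identical.
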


\begin{remark}
  While several alternative methods achieve a similar convergence guarantee to that in Lemma~\ref{lemma:FGMConv} \cite{Rodomanov2024,Li2025,Zhao2025}, their analyses are restricted to the H\"{o}lder smoothness condition with respect to the $\ell_2$-norm. 
  A direct extension of Lemma~\ref{lemma:ExpRenyiHolder} to the $\ell_2$ setting yields the H\"{o}lder parameters $(1/\alpha-1,n^{0.5/\alpha}/\alpha)$.
  However, the additional dependence on $n$ would degrade the overall convergence rate, thereby justifying our focus on the $\ell_1$-norm analysis.
\end{remark}

By combining Lemma~\ref{lemma:FGMConv} with the H\"{o}lder smoothness parameters established in Lemma~\ref{lemma:ExpRenyiHolder} and setting the error tolerance to
\begin{align}
  \label{eq:FGM_epsilon}
  \varepsilon=\log(n)^{0.5/\alpha}T^{1-1.5/\alpha},
\end{align}
we obtain a convergence rate of $O\left(\log(n)^{0.5/\alpha}T^{1-1.5/\alpha}\right)$.

Finally, in Corollary~\ref{cor:RenyiCapacityConv}, we convert this guarantee for the auxiliary problem \eqref{def:ExpRenyiCapacity} into one for the original formulation \eqref{def:RenyiCapacity}, which directly follows from the definition $g_{\alpha}^{\text{R}}(p)=(\alpha/(1-\alpha))\log(\tilde{g}_{\alpha}^{\text{R}}(p))$ and the
fact that $\log(y)-\log(x)\leq (y-x)/x$ for any $y\geq x>0$.
\begin{corollary}
  \label{cor:RenyiCapacityConv}
  Let $p_{T+1}\in\Delta_n$ be as in Lemma~\ref{lemma:FGMConv} with $\varepsilon$ given in \eqref{eq:FGM_epsilon}.
  Then, for any $\alpha\in[1/2,1)$, we have
  \begin{align}
    {g}_{\alpha}^{\text{R}}(p_{T+1}) -\min_{p\in\Delta_n}{g}_{\alpha}^{\text{R}}(p)
    \leq C\cdot T^{1-1.5/\alpha},\nonumber
  \end{align}
  where $C=O(\log(n)^{0.5/\alpha}\exp((1/\alpha-1)C_{\mathcal{W},\alpha})/(1-\alpha))$.
\end{corollary}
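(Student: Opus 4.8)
The plan is to transfer the convergence guarantee for the exponentiated objective $\tilde{g}_{\alpha}^{\text{R}}$ in Lemma~\ref{lemma:FGMConv} to the original objective $g_{\alpha}^{\text{R}}$ via the monotone reparametrization $g_{\alpha}^{\text{R}}(p) = \frac{\alpha}{1-\alpha}\log \tilde{g}_{\alpha}^{\text{R}}(p)$, and then to read off the explicit constant.

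First I would record two consequences of this reparametrization. Since $\alpha\in[1/2,1)$, the scalar map $t\mapsto\frac{\alpha}{1-\alpha}\log t$ is strictly increasing on $(0,\infty)$, so $g_{\alpha}^{\text{R}}$ and $\tilde{g}_{\alpha}^{\text{R}}$ have the same minimizers on $\Delta_n$. Writing $\tilde{g}^{\star}\coloneqq\min_{p\in\Delta_n}\tilde{g}_{\alpha}^{\text{R}}(p)$, which is positive because it is the trace of the $(1/\alpha)$-th power of a nonzero positive semidefinite matrix, we obtain $\min_{p\in\Delta_n}g_{\alpha}^{\text{R}}(p)=\frac{\alpha}{1-\alpha}\log\tilde{g}^{\star}$, and hence, from the definition \eqref{def:RenyiCapacity}, $C_{\mathcal{W},\alpha}=-\frac{\alpha}{1-\alpha}\log\tilde{g}^{\star}$, i.e.\ $1/\tilde{g}^{\star}=\exp((1/\alpha-1)C_{\mathcal{W},\alpha})$. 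Applying the elementary inequality $\log y-\log x\le(y-x)/x$ with $y=\tilde{g}_{\alpha}^{\text{R}}(p_{T+1})\ge\tilde{g}^{\star}=x>0$ then gives
\begin{align}
  g_{\alpha}^{\text{R}}(p_{T+1})-\min_{p\in\Delta_n}g_{\alpha}^{\text{R}}(p)
  \le\frac{\alpha}{1-\alpha}\cdot\frac{\tilde{g}_{\alpha}^{\text{R}}(p_{T+1})-\tilde{g}^{\star}}{\tilde{g}^{\star}}.\nonumber
\end{align}

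Next I would bound $\tilde{g}_{\alpha}^{\text{R}}(p_{T+1})-\tilde{g}^{\star}$ by using the particular H\"{o}lder parameters $(\nu,L_{\nu})=((1-\alpha)/\alpha,\,1/\alpha)$ supplied by Lemma~\ref{lemma:ExpRenyiHolder} inside the minimization of Lemma~\ref{lemma:FGMConv}. With $1+\nu=1/\alpha$, the first term of that bound becomes $c(\alpha)\cdot\varepsilon^{-(2\alpha-1)}T^{2\alpha-3}\log n$ for a factor $c(\alpha)$ depending only on $\alpha$ (through $2^{(2+4\nu)/(1+\nu)}$ and $L_{\nu}^{2/(1+\nu)}$, both bounded by absolute constants for $\alpha\in[1/2,1)$), and the second term is $\varepsilon/2$. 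Substituting the calibrated choice $\varepsilon=\log(n)^{0.5/\alpha}T^{1-1.5/\alpha}$ from \eqref{eq:FGM_epsilon} and carrying out the exponent arithmetic, both terms collapse to $O(\log(n)^{0.5/\alpha}T^{1-1.5/\alpha})$. Plugging this into the displayed inequality, bounding $\alpha/(1-\alpha)\le1/(1-\alpha)$, and substituting $1/\tilde{g}^{\star}=\exp((1/\alpha-1)C_{\mathcal{W},\alpha})$ yields exactly the claimed bound with $C=O(\log(n)^{0.5/\alpha}\exp((1/\alpha-1)C_{\mathcal{W},\alpha})/(1-\alpha))$.

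The only mildly delicate step is the exponent bookkeeping when substituting $\varepsilon$: one must verify that the powers of $T$ and of $\log n$ produced by $(\varepsilon^{-(1-\nu)}T^{-(1+3\nu)})^{1/(1+\nu)}\log n$ match those of $\varepsilon/2$ after the substitution, which is precisely what the calibration \eqref{eq:FGM_epsilon} is designed to ensure. Everything else---strict monotonicity of the reparametrization, coincidence of the minimizers of $g_{\alpha}^{\text{R}}$ and $\tilde{g}_{\alpha}^{\text{R}}$, positivity of $\tilde{g}^{\star}$, and the scalar logarithm inequality---is routine.
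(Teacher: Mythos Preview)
Your proposal is correct and follows exactly the route indicated in the paper: use the monotone reparametrization $g_{\alpha}^{\text{R}}(p)=\frac{\alpha}{1-\alpha}\log\tilde{g}_{\alpha}^{\text{R}}(p)$ together with the scalar inequality $\log y-\log x\le(y-x)/x$ to transfer the bound of Lemma~\ref{lemma:FGMConv}, instantiated with the H\"{o}lder parameters of Lemma~\ref{lemma:ExpRenyiHolder} and the calibrated $\varepsilon$ of \eqref{eq:FGM_epsilon}. Your write-up in fact supplies more detail than the paper's one-line justification, including the identification $1/\tilde{g}^{\star}=\exp((1/\alpha-1)C_{\mathcal{W},\alpha})$ that produces the stated constant.
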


\begin{remark}
  For a fixed number of iterations $T$, selecting the theoretically optimal $\varepsilon$ in Lemma~\ref{lemma:FGMConv} requires knowledge of the tightest H\"{o}lder smoothness parameters.
  The choice of $\varepsilon$ in \eqref{eq:FGM_epsilon} is based on the estimates provided in Lemma~\ref{lemma:ExpRenyiHolder} and may therefore be conservative; indeed, the numerical results in Section~\ref{sec:NumericalResults} suggest that employing a smaller $\varepsilon$ leads to better empirical performance.
\end{remark}

\section{Maximizing the Petz-Augustin Information}
\label{sec:MaximizingAugInfo}
This section presents our Blahut-Arimoto-type algorithm for computing the Petz-Augustin capacity $C_{\mathcal{W},\alpha}$ via the maximization of the Petz-Augustin information \eqref{def:AugCapacity}.
This algorithm was originally developed in our unpublished preprint \cite[Section~6]{Chu2025}.
Specifically, we decompose the optimization problem \eqref{def:AugCapacity} into two nested subproblems: an inner subproblem to compute the Petz-Augustin information \eqref{def:AugInfo} for a fixed input distribution $p$,
and an outer subproblem to maximize this quantity over the simplex $\Delta_n$.

\subsection{Inner Subproblem: Computing the Petz-Augustin Information}
An efficient algorithm for computing the Petz-Augustin information $I_{\alpha}^{\text{A}}(p,\mathcal{W})$ in the optimization problem \eqref{def:AugInfo} is crucial for implementing first-order methods to solve the outer subproblem \eqref{def:AugCapacity}.
Specifically, let $Q_{\star}(p)$ denote the minimizer of \eqref{def:AugInfo}.
The gradient $\nabla I_{\alpha}^{\text{A}}(p,\mathcal{W})$ with respect to $p$ can be expressed as \cite[Equation~(25)]{Cheng2022}:
\begin{align}
    \nabla I_{\alpha}^{\text{A}}(p,\mathcal{W})[j]
    = D_{\alpha}(\mathcal{W}(j)\| Q_{\star}(p)),\quad \forall j.\nonumber
\end{align}

To this end, we propose the following novel fixed-point algorithm for computing $I_{\alpha}^{\text{A}}(p,\mathcal{W})$ and its gradient for a fixed input distribution $p$ \cite[Equation~(11)]{Chu2025}:
\begin{itemize}
  \item Initialize $Q_1\in\mathrm{relint}\left(\mathcal{D}_d\right)$.
  \item For $t=1,2,\dots$, compute
  \begin{align}
    \label{alg:AugInfoIter}
    Q_{t+1}
    =\left(\sum_{j=1}^n p[j] \frac{\mathcal{W}(j)^{\alpha}}{\mathrm{Tr}\left[\mathcal{W}(j)^{\alpha} Q_t^{1-\alpha}\right]}\right)^{1/\alpha},
  \end{align}
  together with the estimates $\hat{\nabla}_{t+1}[k]=D_{\alpha}(\mathcal{W}(k)\| Q_{t+1})$ for each $k^{\text{th}}$ coordinate of $\nabla I_{\alpha}^{\text{A}}(p,\mathcal{W})$, and $\hat{I}_{t+1}=\inner{p}{\hat{\nabla}_{t+1}}$ for $I_{\alpha}^{\text{A}}(p,\mathcal{W})$.
\end{itemize}

To analyze the convergence rate of the algorithm \eqref{alg:AugInfoIter}, we adopt the Thompson metric $d_{\mathrm{T}}$ \cite{Thompson1963}, defined as
\begin{align}
        d_{\mathrm{T}}(V,U)\coloneqq \inf
        \{ r\geq0 \mid \exp(-r) V\leq U \leq \exp(r)V \},\quad\forall V,U\in\mathcal{B}_{d,++}.\nonumber
\end{align}
The following lemma establishes the contractive property of the algorithm \eqref{alg:AugInfoIter} with respect to the Thompson metric.
\begin{lemma}[{\cite[Lemma~5.2, Lemma~5.3]{Chu2025}}]
    \label{lemma:AugInfoContraction}
    For any $\alpha\in(1/2,1)\cup(1,\infty)$ and $p\in\mathrm{relint}\left(\Delta_n\right)$, the iteration rule \eqref{alg:AugInfoIter} satisfies
    \begin{align}
        d_{\mathrm{T}}(Q_{t+1}^{1-\alpha},Q_{\star}^{1-\alpha})
        \leq \abs{\frac{1-\alpha}{\alpha}} d_{\mathrm{T}}(Q_t^{1-\alpha},Q_{\star}(p)^{1-\alpha}).\nonumber
    \end{align}
\end{lemma}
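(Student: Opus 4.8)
The plan is to recast the iteration \eqref{alg:AugInfoIter} as a composition of maps on the cone $\mathcal{B}_{d,++}$ and to track how each factor acts on the Thompson metric $d_{\mathrm{T}}$. Write $R_t \coloneqq Q_t^{1-\alpha}$ and $R_\star \coloneqq Q_\star(p)^{1-\alpha}$, so the target inequality is $d_{\mathrm{T}}(R_{t+1}, R_\star) \le \abs{(1-\alpha)/\alpha}\, d_{\mathrm{T}}(R_t, R_\star)$. First I would record the two standard facts about the Thompson metric that do the heavy lifting: (i) for any fixed positive operator $A$, the scalar-type map $X \mapsto \mathrm{Tr}[A X]$ (restricted to the scaling it induces) and, more importantly, the map $X \mapsto A X^{-1} A$ and congruences $X \mapsto C^* X C$ are $d_{\mathrm{T}}$-isometries or nonexpansive; and (ii) the operator power map $X \mapsto X^r$ satisfies $d_{\mathrm{T}}(X^r, Y^r) = \abs{r}\, d_{\mathrm{T}}(X,Y)$ for $X, Y \in \mathcal{B}_{d,++}$ and $r \in [-1,1]$ — this is where the contraction factor $\abs{(1-\alpha)/\alpha}$ will ultimately come from, via exponents $1-\alpha$ and $1/\alpha$. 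These are classical (Thompson, and the Löwner–Heinz-type monotonicity underlying them); I would cite them from a standard reference.

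Next I would express one step of \eqref{alg:AugInfoIter} in terms of $R_t$. Since $Q_{t+1}^\alpha = \sum_j p[j]\, \mathcal{W}(j)^\alpha / \mathrm{Tr}[\mathcal{W}(j)^\alpha Q_t^{1-\alpha}]$ and $Q_t^{1-\alpha} = R_t$, we have $Q_{t+1}^\alpha = \Phi(R_t)$ where $\Phi(R) \coloneqq \sum_j \big(p[j]/\mathrm{Tr}[\mathcal{W}(j)^\alpha R]\big)\, \mathcal{W}(j)^\alpha$, and therefore $R_{t+1} = Q_{t+1}^{1-\alpha} = \Phi(R_t)^{(1-\alpha)/\alpha}$. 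The fixed-point characterization of the minimizer $Q_\star(p)$ of \eqref{def:AugInfo} (obtained by setting the gradient in $Q$ to zero, cf. \cite[Equation~(25)]{Cheng2022}) gives exactly $R_\star = \Phi(R_\star)^{(1-\alpha)/\alpha}$, i.e. $R_\star$ is a fixed point of the same map. So it suffices to show that $R \mapsto \Phi(R)^{(1-\alpha)/\alpha}$ contracts $d_{\mathrm{T}}$ by the claimed factor. By fact (ii) applied with exponent $(1-\alpha)/\alpha$ — which lies in $[-1,1]$ exactly when $\alpha \ge 1/2$, accounting for the hypothesis $\alpha \in (1/2,1)\cup(1,\infty)$ — this reduces to showing that $\Phi$ itself is $d_{\mathrm{T}}$-\emph{nonexpansive} (contraction factor $1$) on $\mathcal{B}_{d,++}$.

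The core of the argument is therefore the nonexpansiveness of $\Phi$. I would prove it coordinatewise-in-the-sum: if $\exp(-r) U \le R \le \exp(r) U$, then for each $j$ the congruence-type inequality $\exp(-r)\,\mathrm{Tr}[\mathcal{W}(j)^\alpha U] \le \mathrm{Tr}[\mathcal{W}(j)^\alpha R] \le \exp(r)\,\mathrm{Tr}[\mathcal{W}(j)^\alpha U]$ holds (scalar consequence of operator monotonicity of the trace pairing), hence $\exp(-r)\,\mathcal{W}(j)^\alpha/\mathrm{Tr}[\mathcal{W}(j)^\alpha R] \le \mathcal{W}(j)^\alpha/\mathrm{Tr}[\mathcal{W}(j)^\alpha U] \le \exp(r)\,\mathcal{W}(j)^\alpha/\mathrm{Tr}[\mathcal{W}(j)^\alpha R]$; summing a family of two-sided operator inequalities with the same constants $\exp(\pm r)$ preserves them, giving $\exp(-r)\,\Phi(R) \le \Phi(U) \le \exp(r)\,\Phi(U)$... more precisely $\exp(-r)\Phi(R)\le \Phi(U)$ and $\Phi(U)\le \exp(r)\Phi(R)$, i.e. $d_{\mathrm{T}}(\Phi(R),\Phi(U)) \le r$; taking the infimum over admissible $r$ yields $d_{\mathrm{T}}(\Phi(R),\Phi(U)) \le d_{\mathrm{T}}(R,U)$. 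Combining: $d_{\mathrm{T}}(R_{t+1},R_\star) = d_{\mathrm{T}}(\Phi(R_t)^{(1-\alpha)/\alpha}, \Phi(R_\star)^{(1-\alpha)/\alpha}) = \abs{(1-\alpha)/\alpha}\, d_{\mathrm{T}}(\Phi(R_t),\Phi(R_\star)) \le \abs{(1-\alpha)/\alpha}\, d_{\mathrm{T}}(R_t,R_\star)$, which is the claim.

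The main obstacle I anticipate is not any single estimate but handling the \emph{direction of the flip}: raising to the power $(1-\alpha)/\alpha$ is orientation-reversing when $\alpha > 1$, and one must be careful that the ``two-sided sandwich'' formulation of $d_{\mathrm{T}}$ is symmetric and survives this — it does, since $d_{\mathrm{T}}(X,Y)=d_{\mathrm{T}}(Y,X)$ and the power identity $d_{\mathrm{T}}(X^r,Y^r)=\abs{r}d_{\mathrm{T}}(X,Y)$ holds for negative $r$ as well. A secondary point requiring care is making sure $\Phi(R) \in \mathcal{B}_{d,++}$ (not merely $\mathcal{B}_{d,+}$) along the whole iteration, so that $R_{t+1}$ and the metric are well-defined; this follows from $p \in \mathrm{relint}(\Delta_n)$ (all $p[j]>0$) together with $\sum_j \mathcal{W}(j)^\alpha \succ 0$ under the standing nondegeneracy of $\mathcal{W}$, and I would note it explicitly. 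Everything else is an assembly of the two quoted Thompson-metric facts.
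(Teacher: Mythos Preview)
The paper does not prove this lemma in the present document; it defers entirely to \cite[Lemma~5.2, Lemma~5.3]{Chu2025}. Your Thompson-metric argument---factor the update as $R\mapsto\Phi(R)\mapsto\Phi(R)^{(1-\alpha)/\alpha}$, show $\Phi$ is $d_{\mathrm T}$-nonexpansive via the two-sided sandwich on the scalar weights $1/\mathrm{Tr}[\mathcal W(j)^\alpha R]$, then pick up the factor $\lvert(1-\alpha)/\alpha\rvert$ from the operator power---is correct and is the natural proof one expects here; it very likely coincides with the preprint's Lemma~5.3.

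Two small corrections. First, your fact~(ii) should read $d_{\mathrm T}(X^r,Y^r)\le\lvert r\rvert\,d_{\mathrm T}(X,Y)$ for $r\in[-1,1]$: L\"owner--Heinz gives only the inequality, and equality fails for non-commuting $X,Y$. Accordingly the second ``$=$'' in your final chain must be ``$\le$''; this is harmless since only the inequality is needed. Second, the fixed-point identity $R_\star=\Phi(R_\star)^{(1-\alpha)/\alpha}$ is not what \cite[Equation~(25)]{Cheng2022} says---as quoted in the present paper, that equation gives $\nabla_p I_\alpha^{\mathrm A}(p,\mathcal W)$, not a first-order condition in $Q$. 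Establishing that the Petz--Augustin mean $Q_\star(p)$ satisfies this particular fixed-point equation is exactly the role of \cite[Lemma~5.2]{Chu2025}, and in the non-commutative setting it is not an immediate KKT computation (the Fr\'echet derivative of $Q\mapsto\mathrm{Tr}[A\,Q^{1-\alpha}]$ does not reduce to $(1-\alpha)\,\mathrm{Tr}[A\,Q^{-\alpha}\,\cdot\,]$ when $A$ and $Q$ fail to commute). You should either cite that lemma directly or supply the argument; everything else in your proposal goes through.
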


Based on Lemma~\ref{lemma:AugInfoContraction}, we obtain the following convergence rate for approximating the first-order oracle of $I_{\alpha}^{\text{A}}(p,\mathcal{W})$.
\begin{lemma}[{\cite[Lemma~6.2]{Chu2025}}]
    \label{lemma:AugInfoConv}
    For any $\alpha\in(1/2,1)\cup(1,\infty)$ and $p\in\mathrm{relint}\left(\Delta_n\right)$, let $Q_{T+1},\hat{\nabla}_{T+1},\hat{I}_{T+1}$ be the outputs of the algorithm \eqref{alg:AugInfoIter} after $T$ iterations.
    Then, the errors, $\abs{\hat{I}_{T+1}-I_{\alpha}^{\text{A}}(p,\mathcal{W})}$ and $\norm{\hat{\nabla}_{T+1}-\nabla I_{\alpha}^{\text{A}}(p,\mathcal{W})}_{\infty}$, are both bounded by $O(\abs{1-1/\alpha}^{T}d_{\mathrm{T}}(Q_{1}^{1-\alpha},Q_{\star}(p)^{1-\alpha}))$.
\end{lemma}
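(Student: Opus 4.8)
The plan is to translate the geometric contraction in the Thompson metric (Lemma~\ref{lemma:AugInfoContraction}) into explicit error bounds on the scalar quantities $\hat{I}_{T+1}$ and $\hat{\nabla}_{T+1}$. First I would unfold Lemma~\ref{lemma:AugInfoContraction} recursively to obtain $d_{\mathrm{T}}(Q_{T+1}^{1-\alpha},Q_{\star}(p)^{1-\alpha}) \leq \abs{1-1/\alpha}^{T} d_{\mathrm{T}}(Q_{1}^{1-\alpha},Q_{\star}(p)^{1-\alpha})$, which is exactly the $O(\abs{1-1/\alpha}^{T} d_{\mathrm{T}}(Q_1^{1-\alpha},Q_\star(p)^{1-\alpha}))$ factor appearing in the statement. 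The remaining work is to show that each coordinate error $\abs{\hat{\nabla}_{T+1}[k] - \nabla I_\alpha^{\text{A}}(p,\mathcal{W})[k]} = \abs{D_\alpha(\mathcal{W}(k)\|Q_{T+1}) - D_\alpha(\mathcal{W}(k)\|Q_\star(p))}$ is controlled by $d_{\mathrm{T}}(Q_{T+1}^{1-\alpha},Q_\star(p)^{1-\alpha})$ up to a constant depending only on $\alpha$ and $\mathcal{W}$; the bound on $\abs{\hat{I}_{T+1} - I_\alpha^{\text{A}}(p,\mathcal{W})}$ then follows since $\hat{I}_{T+1} = \inner{p}{\hat{\nabla}_{T+1}}$, $I_\alpha^{\text{A}}(p,\mathcal{W}) = \inner{p}{\nabla I_\alpha^{\text{A}}(p,\mathcal{W})}$, and $p \in \Delta_n$, so $\abs{\hat{I}_{T+1} - I_\alpha^{\text{A}}(p,\mathcal{W})} \leq \norm{\hat{\nabla}_{T+1} - \nabla I_\alpha^{\text{A}}(p,\mathcal{W})}_\infty$ by Hölder.

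For the key Lipschitz-type estimate, recall $D_\alpha(\mathcal{W}(k)\|Q) = \frac{1}{\alpha-1}\log \mathrm{Tr}[\mathcal{W}(k)^\alpha Q^{1-\alpha}]$, so it suffices to bound $\abs{\log \mathrm{Tr}[\mathcal{W}(k)^\alpha Q_{T+1}^{1-\alpha}] - \log \mathrm{Tr}[\mathcal{W}(k)^\alpha Q_\star(p)^{1-\alpha}]}$. Writing $r \coloneqq d_{\mathrm{T}}(Q_{T+1}^{1-\alpha}, Q_\star(p)^{1-\alpha})$, the definition of the Thompson metric gives the operator inequalities $\exp(-r) Q_\star(p)^{1-\alpha} \preceq Q_{T+1}^{1-\alpha} \preceq \exp(r) Q_\star(p)^{1-\alpha}$. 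Multiplying by the positive semidefinite factor $\mathcal{W}(k)^\alpha$ and taking traces (using that $A \preceq B$ implies $\mathrm{Tr}[CA] \leq \mathrm{Tr}[CB]$ for $C \succeq 0$) yields $\exp(-r) \leq \mathrm{Tr}[\mathcal{W}(k)^\alpha Q_{T+1}^{1-\alpha}] / \mathrm{Tr}[\mathcal{W}(k)^\alpha Q_\star(p)^{1-\alpha}] \leq \exp(r)$, hence the difference of logarithms is at most $r$ in absolute value, and $\abs{\hat{\nabla}_{T+1}[k] - \nabla I_\alpha^{\text{A}}(p,\mathcal{W})[k]} \leq r/\abs{\alpha-1}$. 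Absorbing $1/\abs{\alpha-1}$ and the initial-distance constant into the $O(\cdot)$ notation gives the claim.

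I would also note the one subtlety: the traces $\mathrm{Tr}[\mathcal{W}(k)^\alpha Q_\star(p)^{1-\alpha}]$ must be strictly positive for the logarithms (and the ratio bound) to make sense; this follows because $p \in \mathrm{relint}(\Delta_n)$ forces $Q_\star(p) \in \mathcal{B}_{d,++}$ (indeed, from the fixed-point characterization \eqref{alg:AugInfoIter} at the optimum, $Q_\star(p)$ is an inverse $\alpha$-power of a strictly positive combination of the $\mathcal{W}(j)^\alpha$, hence positive definite), and that the iterates $Q_t$ stay in $\mathrm{relint}(\mathcal{D}_d)$, which is already implicit in Lemma~\ref{lemma:AugInfoContraction} since $d_{\mathrm{T}}$ is only defined on $\mathcal{B}_{d,++}$. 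The same positivity ensures $\mathcal{W}(k) \neq 0$ contributes a strictly positive trace; if some $\mathcal{W}(k)$ were rank-deficient this only helps, as $\mathrm{Tr}[\mathcal{W}(k)^\alpha Q^{1-\alpha}] > 0$ still holds whenever $\mathcal{W}(k) \neq 0$.

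The main obstacle is essentially bookkeeping rather than conceptual: one must be careful that the Thompson-metric inequality is applied to the \emph{$(1-\alpha)$ powers} $Q_{T+1}^{1-\alpha}$ and $Q_\star(p)^{1-\alpha}$ — which is precisely the quantity appearing in both Lemma~\ref{lemma:AugInfoContraction} and in the divergence formula $D_\alpha(\mathcal{W}(k)\|Q) = \frac{1}{\alpha-1}\log\mathrm{Tr}[\mathcal{W}(k)^\alpha Q^{1-\alpha}]$ — so no separate operator-monotonicity argument relating $d_{\mathrm{T}}(Q^{1-\alpha}, \cdot)$ to $d_{\mathrm{T}}(Q, \cdot)$ is needed. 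This alignment is what makes the trace-sandwiching step immediate, and is the reason the contraction was stated for the $(1-\alpha)$-powers in the first place.
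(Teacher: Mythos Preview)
Your argument is correct and is precisely the natural derivation of the lemma from the contraction property in Lemma~\ref{lemma:AugInfoContraction}. The paper itself does not supply a proof---it merely cites \cite[Lemma~6.2]{Chu2025}---so there is no in-paper argument to compare against, but your route (iterate the contraction, then convert the Thompson-metric sandwich $\exp(-r)Q_\star^{1-\alpha}\preceq Q_{T+1}^{1-\alpha}\preceq\exp(r)Q_\star^{1-\alpha}$ into a ratio bound on $\mathrm{Tr}[\mathcal{W}(k)^\alpha Q^{1-\alpha}]$ via the monotonicity $A\preceq B\Rightarrow\mathrm{Tr}[CA]\leq\mathrm{Tr}[CB]$ for $C\succeq0$, then divide by $\abs{\alpha-1}$) is exactly the intended mechanism and matches what one finds in the cited preprint. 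Your observation that the contraction is stated for the $(1-\alpha)$-powers precisely so that the trace-sandwiching step requires no additional operator-monotonicity lemma is also on point.

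One minor remark on the positivity discussion: your claim that $Q_\star(p)\in\mathcal{B}_{d,++}$ because it is the $1/\alpha$-power of a ``strictly positive combination of the $\mathcal{W}(j)^\alpha$'' tacitly assumes that $\sum_j p[j]\mathcal{W}(j)^\alpha$ is full rank, which is an assumption on the channel (that the outputs jointly span $\mathbb{C}^d$) rather than a consequence of $p\in\mathrm{relint}(\Delta_n)$ alone. This is a standard standing assumption in this literature and is implicit in the statement of Lemma~\ref{lemma:AugInfoContraction} (otherwise $d_{\mathrm{T}}$ would not be finite), so it does not affect the validity of your proof; it would simply be worth stating explicitly.
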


\subsection{Outer Subproblem: Maximizing the Petz-Augustin Information}
We employ the following entropic mirror descent (EMD) iteration to solve the outer subproblem \eqref{def:AugCapacity} \cite[Equation~(14)]{Chu2025}. 
As discussed in Section~\ref{sec:introduction}, this approach can be viewed as a generalization of the Blahut-Arimoto algorithm.
\begin{itemize}
  \item Initialize $p_1=\mathbf{1}_n/n$ and choose the prox function $h(p)=\inner{p}{\log(p)}$.
  \item For $t=1,2,\dots$, compute
  \begin{align}
    \label{alg:AugCapacityIter}
    p_{t+1}
    =T_{h,\Delta_n}\left(p_t, \nabla g_{\alpha}^{\text{A}}(p_t), 1\right),
  \end{align}
  along with the function value $g_{\alpha}^{\text{A}}(p_{t+1})$.
\end{itemize}

For implementation, the update rule in \eqref{alg:AugCapacityIter} admits the closed-form expression in \eqref{alg:EMD}.
Moreover, according to Lemma~\ref{lemma:AugInfoConv} and the identity $g_{\alpha}^{\text{A}}(p) = -I_{\alpha}^{\text{A}}(p,\mathcal{W})$, both the function value and the gradient required in \eqref{alg:AugCapacityIter} can be efficiently computed using the inner algorithm \eqref{alg:AugInfoIter}.

To establish the convergence rate, we show that  $g_{\alpha}^{\text{A}}(p)$ satisfies the relative smoothness condition.
\begin{lemma}[{\cite[Lemma~6.3]{Chu2025}}]
    \label{lemma:AugInfoRelSmooth}
    For any $\alpha\in[1/2,1)$, the function $g_{\alpha}^{\text{A}}$ in \eqref{def:AugCapacity} is $1$-smooth relative to $h(p)=\inner{p}{\log(p)}$ on $\Delta_n$.
\end{lemma}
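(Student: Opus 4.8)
The plan is to show that $h - g_{\alpha}^{\text{A}}$ is convex on $\Delta_n$, which is exactly the statement that $g_{\alpha}^{\text{A}}$ is $1$-smooth relative to $h$. Writing $g_{\alpha}^{\text{A}}(p) = -I_{\alpha}^{\text{A}}(p,\mathcal{W}) = -\min_{Q\in\mathcal{D}_d}\sum_j p[j] D_{\alpha}(\mathcal{W}(j)\|Q)$, I would first use the variational formula for the Petz-R\'{e}nyi divergence that is already implicit in the fixed-point iteration \eqref{alg:AugInfoIter}: for $\alpha\in[1/2,1)$, the quantity $\sum_j p[j] D_{\alpha}(\mathcal{W}(j)\|Q)$ as a function of $(p,Q)$ is jointly convex (in fact $D_\alpha(A\|Q)$ is convex in $Q$ for $\alpha\in(0,1)$, and it is affine in $p$), so the inner minimization over the compact convex set $\mathcal{D}_d$ produces a concave function of $p$; hence $g_{\alpha}^{\text{A}}$ is convex, but convexity alone is not enough — I need the sharper statement that subtracting the negative entropy preserves convexity.

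The key step is therefore a pointwise comparison of Hessians (or of Bregman divergences). Fix $p\in\mathrm{relint}(\Delta_n)$ with optimal $Q_\star = Q_\star(p)$. Using Danskin's theorem, $\nabla g_{\alpha}^{\text{A}}(p)[j] = -D_{\alpha}(\mathcal{W}(j)\|Q_\star(p))$, and I would compute the second-order behavior of $g_{\alpha}^{\text{A}}$ along a segment by exploiting the envelope structure: since $Q_\star(p)$ is itself defined by the fixed-point equation $Q_\star = \left(\sum_j p[j]\, \mathcal{W}(j)^\alpha / \mathrm{Tr}[\mathcal{W}(j)^\alpha Q_\star^{1-\alpha}]\right)^{1/\alpha}$, differentiating this relation gives an expression for $\partial Q_\star / \partial p$ that can be fed back into the Hessian of $g_{\alpha}^{\text{A}}$. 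The target inequality $\nabla^2 h(p) \succeq \nabla^2 g_{\alpha}^{\text{A}}(p)$ becomes, after the dust settles, a statement of the form: the quadratic form $\sum_j v[j]^2 / p[j]$ (the Hessian of the negative entropy) dominates a quadratic form built from the divergences $D_\alpha(\mathcal{W}(j)\|Q_\star)$ and trace terms involving $\mathcal{W}(j)^\alpha Q_\star^{1-\alpha}$.

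I expect the main obstacle to be controlling the non-commutativity: the derivative of $Q\mapsto Q^{1-\alpha}$ and of $Q\mapsto \mathrm{Tr}[\mathcal{W}(j)^\alpha Q^{1-\alpha}]$ involves the Fr\'{e}chet derivative of a matrix power, which is given by a Daleckii–Krein integral formula and does not simplify the way scalar powers do. I would try to sidestep the explicit Hessian computation by mimicking the classical argument of He et al.\ \cite{He2024a}: rewrite $g_{\alpha}^{\text{A}}$ as a partial minimization of a jointly convex function $\Phi(p,Q)$ and show that $h(p) - g_{\alpha}^{\text{A}}(p) = \min_Q \bigl[h(p) + \text{(convex in }(p,Q))\bigr]$ is convex by verifying that $h(p) + \Phi(p,Q)$ is jointly convex in $(p,Q)$ — convexity is then inherited by the partial minimum. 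Concretely, the term $\sum_j p[j] D_\alpha(\mathcal{W}(j)\|Q)$ should be compared against $-h(p) = -\sum_j p[j]\log p[j]$ by absorbing the $\log$ into the divergence: since $D_\alpha(\mathcal{W}(j)\|Q) = \frac{1}{\alpha-1}\log\mathrm{Tr}[\mathcal{W}(j)^\alpha Q^{1-\alpha}]$ and $\log$ is concave, the combination $p[j]\log p[j] + p[j] D_\alpha(\mathcal{W}(j)\|Q)$ can be handled via the joint convexity of the map $(s,Y)\mapsto s\log s - \frac{s}{1-\alpha}\log\mathrm{Tr}[\mathcal{W}(j)^\alpha Y^{1-\alpha}]$ on $\mathbb{R}_{++}\times\mathcal{B}_{d,++}$; establishing this scalar-matrix joint convexity — likely via the operator concavity of $Y\mapsto Y^{1-\alpha}$ for $\alpha\in(0,1)$ together with the perspective-function construction — is the crux, and is where I would spend most of the effort.
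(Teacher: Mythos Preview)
The paper does not include a proof of Lemma~\ref{lemma:AugInfoRelSmooth}; it is cited from the authors' unpublished preprint \cite[Lemma~6.3]{Chu2025}, so there is no in-paper argument to compare against directly.

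That said, your second route (joint convexity in $(p,Q)$ followed by a partial minimum) is correct, and the step you flag as the ``crux'' is in fact shorter than you anticipate. Writing
\[
h(p)-g_\alpha^{\text{A}}(p)=\min_{Q\in\mathcal{D}_d}\sum_{j}\bigl[p[j]\log p[j]+p[j]\,D_\alpha(\mathcal{W}(j)\|Q)\bigr],
\]
each summand is \emph{exactly} the perspective transform of the convex function $\phi_j(Q)\coloneqq D_\alpha(\mathcal{W}(j)\|Q)$: a one-line computation gives
\[
s\,\phi_j(Q/s)=\frac{s}{\alpha-1}\log\bigl(s^{-(1-\alpha)}\mathrm{Tr}[\mathcal{W}(j)^\alpha Q^{1-\alpha}]\bigr)=s\log s+s\,D_\alpha(\mathcal{W}(j)\|Q).
\]
Convexity of $\phi_j$ for $\alpha\in(0,1)$ follows immediately from operator concavity of $Q\mapsto Q^{1-\alpha}$ (so $\mathrm{Tr}[\mathcal{W}(j)^\alpha Q^{1-\alpha}]$ is concave, its logarithm is concave, and the prefactor $1/(\alpha-1)<0$ flips the sign). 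Hence each summand is jointly convex in $(p[j],Q)$, the sum is jointly convex in $(p,Q)$, and the partial minimum over the convex set $\mathcal{D}_d$ is convex in $p$. No Hessian computation, Daleckii--Krein formula, or differentiation of the fixed-point map $Q_\star(p)$ is needed; your first approach can be discarded entirely. Note also that this argument goes through for every $\alpha\in(0,1)$, so the restriction to $[1/2,1)$ in the lemma is inherited from elsewhere in the paper rather than from this proof.
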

By combining Lemma~\ref{lemma:AugInfoRelSmooth} with the standard convergence guarantee for mirror descent (Lemma~\ref{lemma:MirrorConv}), we obtain the following result:
\begin{theorem}[{\cite[Theorem~6.5]{Chu2025}}]
  \label{thm:AugCapacityConv}
  Let $p_{T+1},g_{\alpha}^{\text{A}}(p_{T+1})$ be the outputs of \eqref{alg:AugCapacityIter} with $\alpha\in(1/2,1)$.
  Then, the optimization error satisfies 
  \begin{align}
    g_{\alpha}^{\text{A}}(p_{T+1}) - \min_{p\in\Delta_n}g_{\alpha}^{\text{A}}(p)
    \leq\frac{\log(n)}{T}.\nonumber
  \end{align}
\end{theorem}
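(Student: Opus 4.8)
The plan is to invoke the mirror-descent convergence guarantee of Lemma~\ref{lemma:MirrorConv} with the specific choices $f = g_{\alpha}^{\text{A}}$, $h(p) = \inner{p}{\log(p)}$, $\mathcal{C} = \Delta_n$, and $L = 1$. The two hypotheses of Lemma~\ref{lemma:MirrorConv} must be verified. First, $g_{\alpha}^{\text{A}}$ is convex on $\Delta_n$: this is inherited from the variational form \eqref{def:AugInfo}, since $g_{\alpha}^{\text{A}}(p) = -I_{\alpha}^{\text{A}}(p,\mathcal{W}) = \max_{Q\in\mathcal{D}_d}\sum_j p[j]\,(-D_{\alpha}(\mathcal{W}(j)\|Q))$ is a pointwise supremum of affine (hence convex) functions of $p$; alternatively, it follows directly from the $1$-relative smoothness in Lemma~\ref{lemma:AugInfoRelSmooth} together with convexity of $h$, since $h - g_{\alpha}^{\text{A}}$ convex and $h$ convex forces nothing about $g_{\alpha}^{\text{A}}$ on its own---so I would appeal to the variational characterization, which also shows differentiability via the gradient formula $\nabla g_{\alpha}^{\text{A}}(p)[j] = -D_{\alpha}(\mathcal{W}(j)\|Q_{\star}(p))$ given in the excerpt. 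Second, $g_{\alpha}^{\text{A}}$ is $1$-smooth relative to $h$ on $\Delta_n$, which is exactly Lemma~\ref{lemma:AugInfoRelSmooth} for $\alpha\in[1/2,1)\supseteq(1/2,1)$.

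Next I would check that the iteration \eqref{alg:AugCapacityIter} is precisely the mirror-descent step required by Lemma~\ref{lemma:MirrorConv} with $L = 1$. Indeed $p_{t+1} = T_{h,\Delta_n}(p_t, \nabla g_{\alpha}^{\text{A}}(p_t), 1)$ matches $x_{t+1} = T_{h,\mathcal{C}}(x_t, \nabla f(x_t), 1/L)$ with $1/L = 1$. Moreover the initialization $p_1 = \mathbf{1}_n/n$ lies in $\mathrm{relint}(\Delta_n)$, satisfying the requirement $x_1\in\mathrm{relint}(\mathcal{C})$. Applying Lemma~\ref{lemma:MirrorConv} then gives, for every $T\in\mathbb{N}$ and every $p\in\Delta_n$,
\begin{align}
  g_{\alpha}^{\text{A}}(p_{T+1}) - g_{\alpha}^{\text{A}}(p) \leq \frac{B_h(p\|p_1)}{T}.\nonumber
\end{align}

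To finish, I would bound $B_h(p\|p_1)$ uniformly over $p\in\Delta_n$. With $h(p) = \inner{p}{\log(p)}$ and $p_1 = \mathbf{1}_n/n$, a direct computation gives $B_h(p\|p_1) = \inner{p}{\log p} - \inner{p}{\log p_1} = \inner{p}{\log p} + \log(n)\inner{p}{\mathbf{1}_n}$; since $\inner{p}{\mathbf{1}_n} = 1$ and the negative Shannon entropy satisfies $\inner{p}{\log p}\leq 0$ on $\Delta_n$, we get $B_h(p\|p_1)\leq \log(n)$. (Equivalently, $B_h(p\|p_1)$ is the Kullback--Leibler divergence from $p$ to the uniform distribution, which is at most $\log(n)$.) Substituting this bound and taking $p$ to be a minimizer of $g_{\alpha}^{\text{A}}$ over $\Delta_n$ yields
\begin{align}
  g_{\alpha}^{\text{A}}(p_{T+1}) - \min_{p\in\Delta_n} g_{\alpha}^{\text{A}}(p) \leq \frac{\log(n)}{T},\nonumber
\end{align}
as claimed. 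There is essentially no hard step here: the proof is a clean composition of Lemma~\ref{lemma:AugInfoRelSmooth} and Lemma~\ref{lemma:MirrorConv} plus the textbook bound on the KL divergence to the uniform distribution. The only points requiring a moment of care are confirming differentiability and convexity of $g_{\alpha}^{\text{A}}$ (so that Lemma~\ref{lemma:MirrorConv} applies), both of which follow from the variational representation \eqref{def:AugInfo} and the gradient identity, and noting that all this is valid on the stated range $\alpha\in(1/2,1)$ where the inner algorithm \eqref{alg:AugInfoIter} also converges so the oracle in \eqref{alg:AugCapacityIter} is implementable.
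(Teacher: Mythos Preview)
Your proposal is correct and matches the paper's own argument: the theorem is obtained by combining Lemma~\ref{lemma:AugInfoRelSmooth} with Lemma~\ref{lemma:MirrorConv} and then bounding $B_h(p\|p_1)\leq \log(n)$ for $p_1=\mathbf{1}_n/n$, exactly as you do. Your additional verification of convexity and differentiability of $g_{\alpha}^{\text{A}}$ via the variational representation is a welcome bit of care that the paper leaves implicit.
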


\section{Numerical Results}
\label{sec:NumericalResults}
We evaluate our proposed methods over $T=1000$ iterations to compute the Petz-Augustin capacity $C_{\mathcal{W},\alpha}$ for $\alpha$ in $\{0.6,0.9\}$. 
We set the input alphabet size to $n=2^{7}$ and the dimension of the output quantum states to $d=2^{5}$.
The quantum states $\mathcal{W}(j)$ are generated using the \texttt{rand\_dm} function from the Python package QuTiP \cite{Johansson2012}.
All experiments are conducted on a machine with an
Intel Xeon Gold 5218 CPU running at 2.30 GHz.
\ifbool{versionArXiv}{
  The source code is available on GitHub\footnote{\url{https://github.com/chunnengchu/PetzAugustinCapacity/}}.
}{
  The source code is available on GitHub\footnote{https://github.com/chunnengchu/PetzAugustinCapacity/}.
}

In Figures~\ref{fig1} and \ref{fig2}, \emph{FGM--Balanced} denotes the FGM with $\varepsilon=\log(n)^{0.5/\alpha}T^{1-1.5/\alpha}$ following \eqref{eq:FGM_epsilon}, and  \emph{FGM--1e-9} denotes the FGM with a fixed $\varepsilon=10^{-9}$.
\emph{Blahut-Arimoto} refers to the proposed Blahut-Arimoto-type algorithm \eqref{alg:AugCapacityIter}.
Since the exact optimizer is unknown, we use the maximum value of $I_{\alpha}^{\text{R}}(p_t,\mathcal{W})$ observed over $3000$ iterations of FGM--1e-9 to approximate $C_{\mathcal{W},\alpha}$
when computing optimization errors.
\ifbool{versionArXiv}{
\begin{figure}[htbp]
    \centering

    \includegraphics[width=0.98\textwidth]{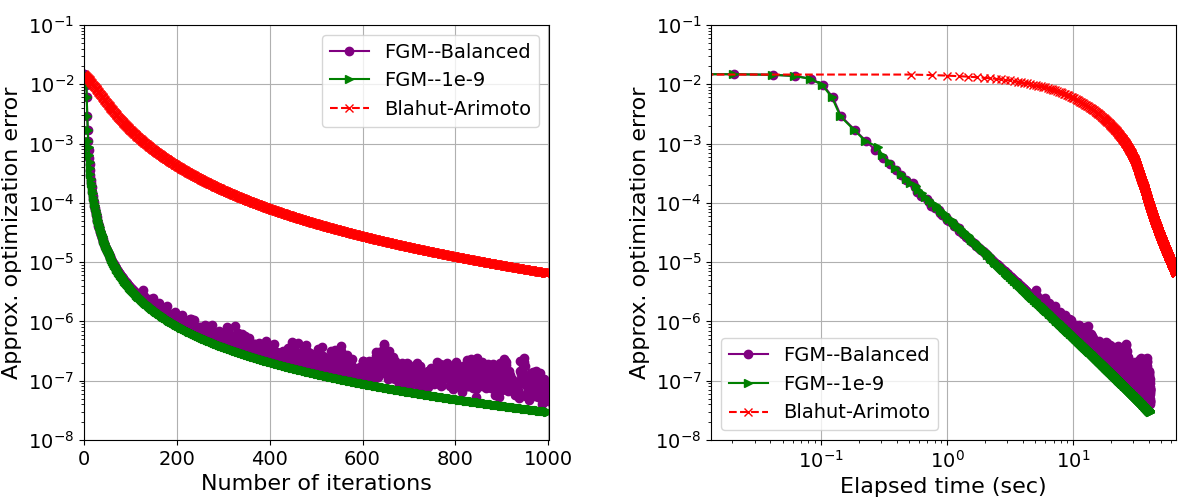}

    \caption{Computing Petz-Augustin capacity for order $\alpha = 0.6$.}
    \label{fig1}
\end{figure}
\begin{figure}[htbp]
    \centering
    
    \includegraphics[width=0.98\textwidth]{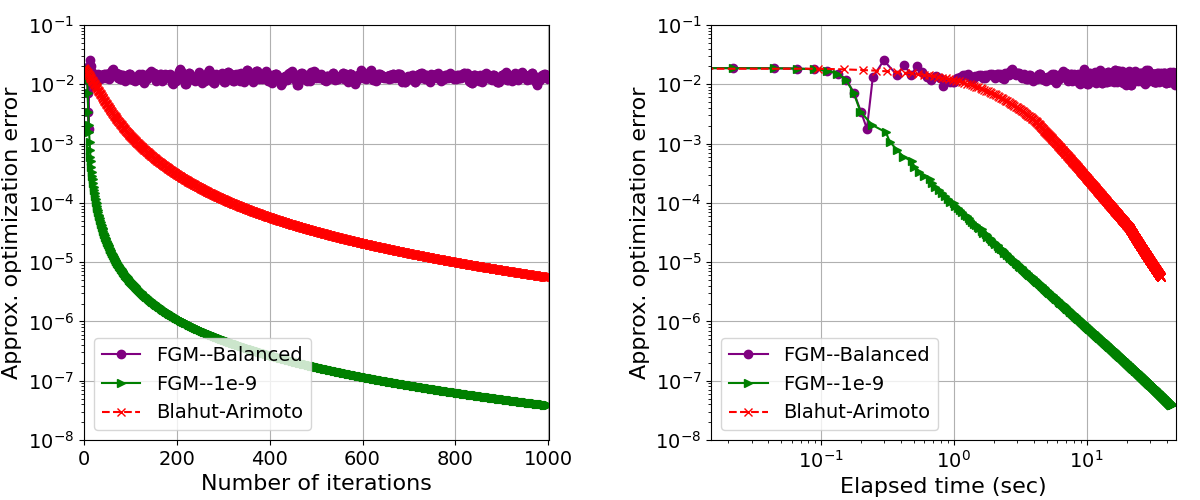}
    
    \caption{Computing Petz-Augustin capacity for order $\alpha = 0.9$.}
    \label{fig2}
\end{figure}
}{
\begin{figure}[htbp]
    \centering

    \includegraphics[width=0.49\textwidth]{Contents/figure_0.6_32_128.png}

    \caption{Computing Petz-Augustin capacity for order $\alpha = 0.6$.}
    \label{fig1}
\end{figure}
\begin{figure}[htbp]
    \centering
    
    \includegraphics[width=0.49\textwidth]{Contents/figure_0.9_32_128.png}
    
    \caption{Computing Petz-Augustin capacity for order $\alpha = 0.9$.}
    \label{fig2}
\end{figure}

}

Figure~\ref{fig1} shows that FGM--Balanced outperforms the Blahut-Arimoto-type algorithm for $\alpha=0.6$.
Conversely, as shown in Figure~\ref{fig2}, the Blahut-Arimoto-type algorithm is superior for $\alpha=0.9$.
These observations are consistent with our theoretical guarantees:
for $\alpha=0.6$, the convergence rate $O(T^{-1.5})$ of FGM--Balanced  (Corollary~\ref{cor:RenyiCapacityConv}) is faster than the $O(T^{-1})$ rate of the Blahut-Arimoto-type algorithm (Theorem~\ref{thm:AugCapacityConv}); 
for $\alpha=0.9$, the rate $O(T^{-2/3})$ of FGM--Balanced
is slower, explaining its relative performance.

Notably, FGM--1e-9 has the best performance across both cases.
The choice $\varepsilon=10^{-9}$ can be interpreted as setting $\varepsilon=T^{-3}$.
Combining Lemma~\ref{lemma:FGMConv} with our H\"{o}lder smoothness estimates in Lemma~\ref{lemma:ExpRenyiHolder} yields an error bound of $O(T^{8\alpha-6})$ for FGM--1e-9.
This bound is clearly conservative for $\alpha=0.9$ relative to the empirical results, suggesting that both the H\"{o}lder smoothness estimates and the parameter selection in \eqref{eq:FGM_epsilon} can be further improved.

\section{Remarks}

After completing this work, we learned of a concurrent work by Lai and Cheng \cite{Lai2026}. 
Their work focuses on solving the auxiliary problem~\eqref{def:ExpRenyiCapacity}.
They show that the objective function is smooth relative to the negative Shannon entropy, which implies that EMD converges at a rate of $O(1/T)$ for all $\alpha \in ( 0, 1 )$.
	However, for $\alpha \in ( 1/2, 1 )$, the smoothness parameter they obtain depends on the inverse of the smallest eigenvalue of $\sum_{j=1}^n p[j] \mathcal{W}(j)^{\alpha}$ and can therefore be arbitrarily large; 
as a result, the corresponding optimization error bound involves a scalar that can be arbitrarily large (see Lemma~\ref{lemma:MirrorConv}).
Moreover, their approach does not address the constrained formulation.
By contrast, our approaches apply to the more restricted range $\alpha \in [ 1/2, 1 )$, but yield optimization error bounds that depend only on the problem parameters $\alpha$, $n$, and $C_{\mathcal{W},\alpha}$.
In addition, our second approach allows us to handle the constrained formulation.

\ifbool{versionArXiv}{
\bibliography{Contents/refs}
}{
\bibliographystyle{IEEEtran}
\bibliography{Contents/refs}
}

\ifbool{versionArXiv}{
}{
\enlargethispage{-1.2cm} 
}


\ifbool{versionArXiv}{
  \appendix 

\section{Connection with the Support Vector Machine}
\label{appendix:SVM}
This section introduces the connection between the classical case of the optimization problem \eqref{def:ExpRenyiCapacity} and a generalized support vector machine (SVM) problem considered by Rodomanov et al. \cite{Rodomanov2024}.

In particular, a general classical channel $\mathcal{W}$ can be viewed as a special case of classical-quantum channels in which all output states are diagonal matrices.
In this setting, the optimization problem \eqref{def:ExpRenyiCapacity} reduces to the following formulation:
\begin{align}
    \label{def:ClassicalExpRenyiCapacity}
    \min_{p\in\Delta_n} \tilde{g}_{\alpha}^{\text{R}}(p),
    \quad \tilde{g}_{\alpha}^{\text{R}}(p)
    \coloneqq\sum_{i=1}^d \left(\sum_{j=1}^n p[j] \tilde{A}_j[i]\right)^{1/\alpha},
\end{align}
where each $\tilde{A}_j$ denotes the vector of diagonal entries of the powered output state $\mathcal{W}(j)^{\alpha}$.

On the other hand, the generalized SVM problem studied by Rodomanov et al. \cite[Equation~(5)]{Rodomanov2024} is formulated as
\begin{align}
    \label{def:SVM}
    \min_{\norm{p}_2\leq 1} f_{\alpha}(p),
    \quad f_{\alpha}(p)
    \coloneqq \sum_{i=1}^d\max\Set{ \left(\left(\sum_{j=1}^n p[j] A_j[i]\right)-b[i]\right),0}^{1/\alpha},
\end{align}
where $A_j$ and $b$ are vectors in $\mathbb{R}^d$.
Here, $\alpha=1$ and $\alpha=0.5$ correspond, respectively, to the standard SVM with hinge loss and squared hinge loss.

By setting $b=0$ and $\tilde{A}_j = A_j$ for all $j$, the optimization problem \eqref{def:ClassicalExpRenyiCapacity} coincides with \eqref{def:SVM}, with a slight difference in the constraint set.
Notably, Rodomanov et al. \cite{Rodomanov2024} pointed out that $f_{\alpha}$ is $(\nu,L_{\nu})$-H\"{o}lder smooth with $\nu=(1-\alpha)/\alpha$ for $\alpha\in[1/2,1]$, and with the coefficient $L_{\nu}<\infty$ depending on $A_j$ and $b$.
This directly implies the H\"{o}lder smoothness of the classical case of $\tilde{g}_{\alpha}^{\text{R}}$ in \eqref{def:ClassicalExpRenyiCapacity}.
However, extending this analysis to the general quantum setting in \eqref{def:ExpRenyiCapacity} is non-trivial due to the non-commutative nature of quantum states.
We leverage matrix analysis tools to overcome this challenge, as detailed in Appendix~\ref{appendix:ProofOfExpRenyiHolder}.

\section{Proof of Lemma~\ref{lemma:ExpRenyiHolder}}
\label{appendix:ProofOfExpRenyiHolder}
We will use Lemma \ref{lemma:SchattenHolder} and Lemma \ref{lemma:MatrixPowerDiff} to prove Lemma~\ref{lemma:ExpRenyiHolder}.
\begin{lemma}[H\"{o}lder Inequality {\cite[Theorem~6.20]{Hiai2014}}]
  \label{lemma:SchattenHolder}
  For any $U,V\in\mathcal{B}_{d}$ and $1\leq p,q\leq \infty$ such that $1/p+1/q=1$, we have
    \begin{align}
        \norm{UV}_1
        \leq \norm{U}_p\norm{V}_q.
    \end{align}
\end{lemma}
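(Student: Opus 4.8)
The plan is to reduce the matrix inequality to the classical scalar Hölder inequality applied to the vectors of singular values, with a majorization argument serving as the bridge. Write $\sigma(X) = (\sigma_1(X), \dots, \sigma_d(X))$ for the singular values of $X \in \mathbb{C}^{d \times d}$ arranged in nonincreasing order, so that $\norm{X}_1 = \sum_{i=1}^d \sigma_i(X)$, $\norm{X}_p = (\sum_{i=1}^d \sigma_i(X)^p)^{1/p}$ for $1 \le p < \infty$, and $\norm{X}_\infty = \sigma_1(X)$; Hermiticity of $U$ and $V$ plays no role in the argument. The first step is Horn's multiplicative inequality for the singular values of a product: for every $k \in \{1, \dots, d\}$,
\begin{align}
  \prod_{i=1}^k \sigma_i(UV) \le \prod_{i=1}^k \sigma_i(U)\,\sigma_i(V),\nonumber
\end{align}
i.e., $\sigma(UV)$ is weakly log-majorized by $\sigma(U) \odot \sigma(V)$. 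I would obtain this by passing to $k$-th exterior powers: using $\wedge^k(UV) = (\wedge^k U)(\wedge^k V)$, submultiplicativity of the operator norm, and the identity $\norm{\wedge^k X}_\infty = \prod_{i=1}^k \sigma_i(X)$, the claim for index $k$ reduces to the operator-norm case $\norm{\wedge^k(UV)}_\infty \le \norm{\wedge^k U}_\infty \norm{\wedge^k V}_\infty$; alternatively, Horn's inequality may simply be cited from the matrix-analysis literature.

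The second step upgrades weak log-majorization to ordinary weak majorization. Since $t \mapsto e^t$ is convex and increasing, a weak log-majorization between nonnegative vectors implies a weak majorization, so
\begin{align}
  \sum_{i=1}^k \sigma_i(UV) \le \sum_{i=1}^k \sigma_i(U)\,\sigma_i(V), \qquad k = 1, \dots, d;\nonumber
\end{align}
taking $k = d$ gives $\norm{UV}_1 \le \sum_{i=1}^d \sigma_i(U)\,\sigma_i(V)$. The third step applies the scalar Hölder inequality in $\mathbb{R}^d$ with exponents $p, q$ to the nonnegative sequences $(\sigma_i(U))_i$ and $(\sigma_i(V))_i$, yielding $\sum_{i=1}^d \sigma_i(U)\sigma_i(V) \le \norm{U}_p \norm{V}_q$, with the trivial modification $\sum_i \sigma_i(U)\sigma_i(V) \le \sigma_1(U) \sum_i \sigma_i(V)$ in the endpoint case $(p,q) = (\infty, 1)$. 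Chaining the three inequalities proves $\norm{UV}_1 \le \norm{U}_p \norm{V}_q$.

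I expect the main obstacle to be the first step, Horn's log-majorization inequality, since it is the only place where nontrivial linear-algebraic structure enters; the exterior-power reduction is clean but requires setting up multilinear algebra, and one might reasonably prefer to treat it as a known fact. The passage from log-majorization to majorization and the final scalar Hölder step are routine. A completely different route, arguably the one behind the cited reference, sidesteps Horn's inequality via complex interpolation: the Schatten classes $S_p$ form an interpolation scale with $[S_1, S_\infty]_\theta = S_{1/(1-\theta)}$, and the bilinear map $(U,V) \mapsto UV$ has norm $1$ both as $S_\infty \times S_1 \to S_1$ and as $S_1 \times S_\infty \to S_1$, so bilinear Riesz--Thorin interpolation gives $\norm{UV}_1 \le \norm{U}_p \norm{V}_q$ whenever $1/p + 1/q = 1$; this is conceptually short but relies on heavier machinery than the majorization argument above.
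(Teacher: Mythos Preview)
The paper does not supply its own proof of this lemma: it is stated with a citation to \cite[Theorem~6.20]{Hiai2014} and then invoked as a black box in the proof of Lemma~\ref{lemma:ExpRenyiHolder}. Your proposal is a correct and standard derivation---Horn's log-majorization for the singular values of a product, the passage from weak log-majorization to weak majorization via the convexity and monotonicity of the exponential, and finally the scalar H\"older inequality on the singular-value vectors is exactly the textbook route, and the interpolation alternative you sketch is also valid. Since the paper offers no argument of its own, there is nothing further to compare.
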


\begin{lemma}[{\cite[Theorem~6.35, Exercise~4.45]{Hiai2014}}]
  \label{lemma:MatrixPowerDiff}
  For any $A,B\in\mathcal{B}_{d}^{+}$, $p>1$, and $r\in\left(0,1\right]$, we have
  \begin{align}
      \norm{A^{r}-B^{r}}_p
      \leq \norm{\abs{A-B}^{r}}_p.
  \end{align}
\end{lemma}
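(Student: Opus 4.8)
The plan is to deduce the Schatten-$p$ inequality from a weak-submajorization statement for the singular values of $A^r-B^r$, which in turn comes from sandwiching $A^r-B^r$ between $\pm\abs{A-B}^r$ using the operator monotonicity and operator subadditivity of $t\mapsto t^r$ on $[0,\infty)$ for $r\in(0,1]$.

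First I would set up the sandwich. Write $C\coloneqq A-B$ and let $C=C_+-C_-$ be its Jordan decomposition into positive and negative parts, so $C_+,C_-\in\mathcal{B}_{d,+}$ have mutually orthogonal ranges; hence $\abs{C}=C_++C_-$ and, by the spectral calculus, $\abs{C}^r=C_+^r+C_-^r$. From $C=C_+-C_-\le C_+$ we get $A=B+C\le B+C_+$, so operator monotonicity of $t\mapsto t^r$ gives $A^r\le(B+C_+)^r$, and operator subadditivity $(X+Y)^r\le X^r+Y^r$ for $X,Y\in\mathcal{B}_{d,+}$ (a consequence of operator concavity of $t\mapsto t^r$ together with $0^r=0$) then gives $A^r\le B^r+C_+^r\le B^r+\abs{C}^r$. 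By the symmetric argument applied to $B=A-C_++C_-\le A+C_-$, we also obtain $B^r\le A^r+\abs{C}^r$. Therefore, writing $R\coloneqq A^r-B^r$ and $S\coloneqq\abs{A-B}^r\in\mathcal{B}_{d,+}$, we have $-S\le R\le S$.

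Next I would pass from this operator inequality to a norm bound. The key claim is that $-S\le R\le S$ with $S\ge0$ forces $\sum_{j=1}^k s_j(R)\le\sum_{j=1}^k\lambda_j(S)$ for every $k$, i.e.\ the singular values of $R$ are weakly submajorized by the eigenvalues of $S$. To see this, use the variational formula $\sum_{j=1}^k s_j(R)=\max\mathrm{Tr}\bigl((P_1-P_2)R\bigr)$, where the maximum runs over pairs of orthogonal projections $P_1,P_2$ with $P_1P_2=0$ and $\mathrm{rank}(P_1)+\mathrm{rank}(P_2)\le k$ (valid for Hermitian $R$; checked by splitting the spectrum of $R$ into its positive and negative parts). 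For any admissible pair, $\mathrm{Tr}(P_1R)\le\mathrm{Tr}(P_1S)$ from $R\le S$, and $\mathrm{Tr}(-P_2R)\le\mathrm{Tr}(P_2S)$ from $-R\le S$; adding these and noting that $P_1+P_2$ is a projection of rank at most $k$ with $S\ge0$ yields $\mathrm{Tr}((P_1-P_2)R)\le\mathrm{Tr}((P_1+P_2)S)\le\sum_{j=1}^k\lambda_j(S)$. Finally, since $t\mapsto t^p$ is increasing and convex on $[0,\infty)$ for $p\ge1$, weak submajorization is preserved under it, so $\norm{R}_p^p=\sum_j s_j(R)^p\le\sum_j\lambda_j(S)^p=\norm{S}_p^p$; that is, $\norm{A^r-B^r}_p\le\norm{\abs{A-B}^r}_p$.

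The main obstacle is the operator subadditivity $(X+Y)^r\le X^r+Y^r$: unlike the scalar inequality $\abs{x^r-y^r}\le\abs{x-y}^r$, it relies on operator (not merely scalar) concavity of $t\mapsto t^r$ and is not elementary, so I would invoke it as a known matrix-analysis fact (it underlies the cited statement in \cite{Hiai2014}). A related subtlety worth stressing is that the naive matrix inequality $\abs{A^r-B^r}\le\abs{A-B}^r$ is \emph{false} in general, and that $-S\le R\le S$ does not by itself give $\abs{R}\le S$ for matrices; this is precisely why the proof must route through the two-sided sandwich and a weak-submajorization argument rather than a direct operator-modulus comparison. Everything else—the Jordan decomposition, the spectral identity $\abs{C}^r=C_+^r+C_-^r$, the Ky Fan variational formula, and the convex-function step—is routine.
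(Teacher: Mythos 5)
The second half of your argument is sound: the Ky Fan variational formula, the step from $-S\le R\le S$ with $S\ge 0$ to the weak submajorization $\sum_{j=1}^k s_j(R)\le\sum_{j=1}^k\lambda_j(S)$, and the preservation of weak submajorization under the increasing convex map $t\mapsto t^p$ are all correct. The fatal gap is in the first half: the operator subadditivity $(X+Y)^r\le X^r+Y^r$ is \emph{false} for non-commuting $X,Y\in\mathcal{B}_{d,+}$, and therefore so is the sandwich $A^r-B^r\le\abs{A-B}^r$ that you build on it. Concretely, take $r=1/2$, $X=\mathrm{diag}(1,0)$, and $Y$ the rank-one orthogonal projection onto the span of $(1,1)$. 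Both are projections, so $X^{1/2}+Y^{1/2}=X+Y$; but $X+Y$ has the eigenvalue $1-1/\sqrt{2}\approx 0.293<1$, whose square root $\approx 0.541$ exceeds it, so $(X+Y)^{1/2}\not\le X+Y=X^{1/2}+Y^{1/2}$. Equivalently, with $B=X$ and $A=X+Y$ one has $A-B=Y\ge 0$ yet $A^{1/2}-B^{1/2}\not\le\abs{A-B}^{1/2}$, so the two-sided operator sandwich you propose does not exist. The scalar proof of subadditivity from concavity and $f(0)\ge 0$ uses the weights $x/(x+y)$ and $y/(x+y)$, which have no operator analogue; operator concavity of $t\mapsto t^r$ only gives bounds of the opposite type, such as $f(A+B)\ge\tfrac12 f(2A)+\tfrac12 f(2B)$.

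The repair is to push the majorization argument one level deeper rather than invoking it only at the end. From $A\le B+C_+$ and the L\"{o}wner--Heinz theorem you correctly obtain $A^r-B^r\le (B+C_+)^r-B^r$; what is actually true---and is the substantive content behind the cited results in Hiai--Petz, going back to Ando, Ando--Zhan, and Birman--Koplienko--Solomyak---is not the operator bound $(B+C_+)^r-B^r\le C_+^r$ but the singular-value (weak-majorization) bound $s_j\bigl((B+C_+)^r-B^r\bigr)\le s_j(C_+^r)$ for every $j$, valid for any nonnegative operator monotone $f$ with $f(0)=0$ in place of $t^r$. Combining this with the symmetric estimate for $C_-$ and the orthogonality identity $\abs{C}^r=C_+^r+C_-^r$ yields the weak submajorization of the singular values of $A^r-B^r$ by the eigenvalues of $\abs{A-B}^r$, after which your final Schatten-norm step goes through verbatim. (For what it is worth, the paper does not prove this lemma at all; it simply cites it, so a self-contained proof must genuinely import such an eigenvalue lemma---it cannot be shortcut by an operator inequality.)
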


\begin{proof}[Proof of Lemma~\ref{lemma:ExpRenyiHolder}]
  Since the dual norm of the $\ell_1$-norm is the $\ell_{\infty}$-norm, it suffices to upper bound the coordinate-wise differences of $\nabla \tilde{g}_{\alpha}^{\text{R}}$ at any two points $p_1,p_2\in\Delta_n$.
  In particular, for any $j$, we write 
  \begin{align}
    &\abs{\nabla \tilde{g}_{\alpha}^{\text{R}}(p_1)[j]-\nabla \tilde{g}_{\alpha}^{\text{R}}(p_2)[j]}\nonumber\\
    &\quad=\frac{1}{\alpha} \abs{\mathrm{Tr}\left[\left(\left(\sum_{k=1}^n p_1[k]\mathcal{W}(k)^{\alpha}\right)^{1/\alpha-1} - \left(\sum_{k=1}^n p_2[k]\mathcal{W}(k)^{\alpha}\right)^{1/\alpha-1}\right)\mathcal{W}(j)^{\alpha}\right]}\nonumber\\
    &\quad\leq\frac{1}{\alpha}\norm{\left(\sum_{k=1}^n p_1[k]\mathcal{W}(k)^{\alpha}\right)^{1/\alpha-1} - \left(\sum_{k=1}^n p_2[k]\mathcal{W}(k)^{\alpha}\right)^{1/\alpha-1}}_{1/(1-\alpha)}\cdot\norm{\mathcal{W}(j)^{\alpha}}_{1/\alpha}\nonumber\\
    &\quad=\frac{1}{\alpha}\norm{\left(\sum_{k=1}^n p_1[k]\mathcal{W}(k)^{\alpha}\right)^{1/\alpha-1} - \left(\sum_{k=1}^n p_2[k]\mathcal{W}(k)^{\alpha}\right)^{1/\alpha-1}}_{1/(1-\alpha)}\nonumber\\
    &\quad\leq \frac{1}{\alpha}\norm{\abs{\left(\sum_{k=1}^n p_1[k]\mathcal{W}(k)^{\alpha}\right) - \left(\sum_{k=1}^n p_2[k]\mathcal{W}(k)^{\alpha}\right)}^{1/\alpha-1}}_{1/(1-\alpha)}\nonumber\\
    &\quad= \frac{1}{\alpha}\norm{\left(\sum_{k=1}^n (p_1[k]-p_2[k])\mathcal{W}(k)^{\alpha}\right)}_{1/\alpha}^{1/\alpha-1}\nonumber\\
    &\quad\leq \frac{1}{\alpha}\left(\sum_{k=1}^n \abs{p_1[k]-p_2[k]}\cdot\norm{\mathcal{W}(k)^{\alpha}}_{1/\alpha}\right)^{1/\alpha-1}\nonumber\\
    &\quad= \frac{1}{\alpha}\left(\sum_{k=1}^n \abs{p_1[k]-p_2[k]}\right)^{1/\alpha-1}\nonumber\\
    &\quad= \frac{1}{\alpha}\norm{p_1-p_2}_1^{1/\alpha-1},\nonumber
  \end{align}
  where the first inequality follows from the H\"{o}lder inequality (Lemma~\ref{lemma:SchattenHolder}), the second inequality follows from Lemma~\ref{lemma:MatrixPowerDiff}, the third inequality follows from the triangle inequality, and the second and fourth equalities follow from the fact that $\mathcal{W}(j)\in\mathcal{D}_d$ for all $j$.
  This completes the proof.
\end{proof}

\section{Details of the Universal Fast Gradient Method}
\label{appendix:FGM}
Algorithm~\ref{alg:FGM} provides the pseudocode for the universal fast gradient method (FGM) \cite{Nesterov2015} as applied to the optimization problem \eqref{def:ExpRenyiCapacity}.

The convergence guarantee of FGM is originally stated as follows \cite[Theorem~3]{Nesterov2015}:
\begin{align} 
    \label{eq:FGM_original_bound}
    \tilde{g}_{\alpha}^{\text{R}}(p_{T+1}) -\tilde{g}_{\alpha}^{\text{R}}(p_{\star})
    \leq \min_{\nu,L_{\nu}}\left(\frac{2^{2+4\nu}L_{\nu}^{2}}{\varepsilon^{1-\nu}T^{1+3\nu}}\right)^{1/(1+\nu)}B_h(p_{\star},p_1)+\frac{\varepsilon}{2},
\end{align}
where $p_{\star}$ denotes the minimizer of \eqref{def:ExpRenyiCapacity}.
For the prox function $h(p)=\inner{p}{\log(p)}$, the Bregman divergence $B_h(p_{\star},p_1)$ with the initial iterate $p_1=\mathbf{1}_n/n$ can be bounded as
\begin{align}
    B_h(p_{\star},p_1)
    =\inner{p_{\star}}{\log(p_{\star}) - \log(p_1)}
    \leq \inner{p_{\star}}{-\log(p_1)}
    =\log(n).\nonumber
\end{align}
Substituting this bound into \eqref{eq:FGM_original_bound} yields the convergence guarantee presented in Lemma~\ref{lemma:FGMConv}.

Regarding the implementation, $\varepsilon>0$ remains a tunable parameter in Algorithm~\ref{alg:FGM}.
Its role in the theoretical convergence analysis and its impact on empirical performance are discussed in Sections~\ref{sec:MaximizingRenyiInfo} and \ref{sec:NumericalResults}, respectively.

\begin{algorithm}[t]
\caption{Universal Fast Gradient Method (FGM)}
\label{alg:FGM}
\begin{algorithmic}[1]
\STATE \textbf{Initialization:} $L_1=1$, $p_1 = \tilde{p}_1=\mathbf{1}_n/n$, $A_1 = 0$, $h(p)=\inner{p}{\log(p)}$, $\varepsilon>0$.
\STATE Set the function $\phi_1(p) =  \inner{\nabla \tilde{g}_{\alpha}^{\text{R}}(p_1)}{ p }$.
\FOR{$t = 1, 2, \dots, T$}
    \STATE Compute $q_t = \text{arg min}_{p \in \Delta_n} \phi_t(p)$.
    \STATE Set $i_t = 0$.
    \WHILE{true}
        \STATE Set $a_{t+1, i_t}=\frac{1+\sqrt{1 + 2^{i_t+2}A_t}}{2^{i_t+1}L_t}$, $A_{t+1, i_t} = A_t + a_{t+1, i_t}$, $\tau_{t, i_t} = \frac{a_{t+1, i_t}}{A_{t+1, i_t}}$, $\tilde{p}_{t+1, i_t} = \tau_{t, i_t} q_t + (1 - \tau_{t, i_t}) p_t$
        \STATE Compute $\hat{p}_{t+1, i_t} = T_{h,\Delta_n}(q_t, \nabla \tilde{g}_{\alpha}^{\text{R}}(\tilde{p}_{t+1, i_t}), a_{t+1, i_t})$ by \eqref{alg:EMD}
        \STATE Set $p_{t+1, i_t} = \tau_{t, i_t} \hat{p}_{t+1, i_t} + (1 - \tau_{t, i_t}) p_t$
        
        \IF{$\tilde{g}_{\alpha}^{\text{R}}(p_{t+1, i_t}) \leq \tilde{g}_{\alpha}^{\text{R}}(\tilde{p}_{t+1, i_t}) + \inner{\nabla \tilde{g}_{\alpha}^{\text{R}}(\tilde{p}_{t+1, i_t})}{p_{t+1, i_t} - \tilde{p}_{t+1, i_t}} + 2^{i_t-1} L_t \norm{ p_{t+1, i_t} - \tilde{p}_{t+1, i_t}}_1^2 + \frac{\varepsilon}{2} \tau_{t, i_t}$}
            \STATE \textbf{break}
        \ELSE
            \STATE $i_t = i_t + 1$
        \ENDIF
    \ENDWHILE
    \STATE Set $\tilde{p}_{t+1} = \tilde{p}_{t+1, i_t}$, $p_{t+1} = p_{t+1, i_t}$, $a_{t+1} = a_{t+1, i_t}$, $A_{t+1} = A_t + a_{t+1}$,  $L_{t+1} = 2^{i_t-1} L_t$
    \STATE Set the function $\phi_{t+1}(p) = \phi_t(p) + a_{t+1}  \inner{\nabla \tilde{g}_{\alpha}^{\text{R}}(\tilde{p}_{t+1})}{ p }$
\ENDFOR
\STATE \textbf{Output:} $p_{T+1}$ after $T$ iterations.
\end{algorithmic}
\end{algorithm}

}{}

\end{document}